\def\p@enumii{}
\newtheorem{thm}{Theorem}[section]
\newtheorem{lem}[thm]{Lemma}
\newtheorem{prop}[thm]{Proposition}
\newtheorem{cor}[thm]{Corollary}
\newtheorem{conj}{Conjecture}
\theoremstyle{definition}
\newtheorem{defn}[thm]{Definition}
\newtheorem{ex}[thm]{Example}
\theoremstyle{remark}
\newtheorem{rem}[thm]{Remark}
\numberwithin{equation}{section}
\newcommand{\z}{\mathbb{Z}}
\newcommand{\n}{\mathbb{N}}
\newcommand{\ifof}{if and only if }
\newcommand{\good}{LCED}
\renewcommand{\l}{\left}
\renewcommand{\r}{\right}
\newcommand{\chr}{\mathrm{char}\,}
\renewcommand{\sc}{\mathrm{sec}}
\renewcommand{\d}{\mathrm{d}}
\def\f{\frac}
\newcommand{\diag}{\mathrm{diag}}
\newcommand{\tr}{\mathrm{tr}}
\newcommand{\id}{\mathrm{Id}}
\newcommand{\rank}{\mathrm{rank}\,}
\newenvironment{psmat}
{\left(\begin{smallmatrix}}
	{\end{smallmatrix}\right)}
\newcommand{\leg}[2]{\l(\frac{#1}{#2}\r)}
\newcommand{\tleg}[2]{\big(\frac{#1}{#2}\big)}
\begin{document}
\title{Linear Complementary Equi-Dual Codes}
\author{Ashkan Nikseresht$^1$, Shohreh Namazi$^2 $ and Marziyeh Beygi Khormaei$^3$   \\
\it\small Department of Mathematics, College of Sciences, Shiraz University, \\
\it\small 71457-44776, Shiraz, Iran\\
\it\small $^1$ E-mail: ashkan\_nikseresht@yahoo.com\\
\it\small $^2$ E-mail: namazi@shirazu.ac.ir\\
\it\small $^3$ E-mail: m64.beygi@gmail.com  }
\date{}
\maketitle
\begin{abstract}
We call a linear code $C$ with length $n$ over a field $F$, a linear complementary equi-dual code, when there exists
a linear code $D$ over $F$ such that $D$ is permutation equivalent to $C^\perp$ and $(C,D)$ is a linear complementary
pair of codes, that is, $C+ D=F^n$ and $C\cap D=0$. We first state a necessary condition on a code $C$ to be linear
complementary equi-dual. Then, we conjecture that this necessary condition is also sufficient and present several
statements which support this conjecture.
\end{abstract}
Keywords: LCP of codes, linear codes, code equivalence, permutation matrices\\
\indent 2020 Mathematical Subject Classification: 94B05, 94B60, 94A60.

\section{Introduction}
Linear complementary dual codes which were introduced by Massey in \cite{Massey} and their more recent generalizations,
linear complementary pairs of codes, have been extensively studied from the view points of algebraic and coding
properties and also applications in cryptography (see \cite{carlet crypto,Bhasin,ODSM,
Borello,CarletG,CarletM,GuneriM,GuneriO,Hu,Jin,LiuL,LiuX,LiuW,Massey,Sendrier} and the references therein). Recall that
a pair of linear codes $(C,D)$ with length $n$ over the field $F$, is called a \emph{linear complementary pair} of
codes (LCP of codes, for short), when $C\cap D=0$ and $C+ D=F^n$. In the case that $D=C^\perp$, the code $C$ is called
a linear complementary dual code (LCD, for short).

In \cite{Sendrier}, it is shown that LCD codes meet the asymptotic Gilbert-Varshamov bound and in \cite{Jin} several
classes of maximum distance separable LCD codes are introduced. In \cite{Borello,CarletG,GuneriM,GuneriO}, it is
studied when  for an LCP of codes such as $(C,D)$, both $C$ and $D$ are in certain classes of codes, such as
($n$-dimensional) cyclic or quasi-cyclic codes. In \cite{LiuX,LiuW} and \cite{GuneriM,Hu} the concept of LCD codes and
LCP of codes are generalized and studied over rings, respectively.

Both LCD codes and LCP of codes can be used in cryptography against side channel and fault injection attacks (see
\cite{carlet crypto,ODSM,Bhasin}). In these types of attacks the attacker monitors properties such as power
consumption, electromagnetic leaks and timings of the cryptosystem either passively or actively (by inserting some
faults into the system) and then uses the data gathered to find information about the sensitive data. One
countermeasure against such attacks is data masking. In this method the sensitive data $c$ is added with a random
vector $d$ and the system manipulates $z=c+d$ instead of $c$ itself, so that the attacker cannot get any meaningful
information about $c$. If the space $C$ of sensitive data and the space $D$ of all possible random vectors $d$, form an
LCP of codes, then it is possible to reconstruct the sensitive data $c$ from the masked data $z$. In this case, the
method is called \emph{direct sum masking}. The level of security of this method is shown to be $\min\{\d (C), \d
(D^\perp)\}$, where $\d (C)$ denotes the Hamming distance of $C$, and is called the security parameter $\sc(C,D)$ of
the LCP $(C,D)$ (see \cite{ODSM}).

For any LCP $(C,D)$ of codes we have $\sc(C,D)\leq \d (C)$ and when we choose $C$ to be an LCD code and $D=C^\perp$,
the security parameter is simply $\d (C)$. This raises the question for which codes $C$, there is an LCP of codes
$(C,D)$ such that $\sc(C,D)$ gains the maximum possible value, namely $\d (C)$. One possibility is that there is a code
$D$ equivalent to $C^\perp$ such that $(C,D)$ is an LCP of codes. Such codes are the main subject of this research and
we call them linear complementary equi-dual codes (\good\ codes, for short) (see Definition \ref{main defn}).  In
Section 2, we state several examples and some preliminary results and also we give a necessary condition on $C$ to be
\good. Then in Section 3, we present the conjecture that this necessary condition is also sufficient. We prove several
results which support this conjecture and also study some consequences of trueness of this conjecture.

\section{Preliminary results and a necessary condition}
In the sequel, $F$ always denotes a field and $C$ is a linear code of length $n$ over $F$, that is, a linear subspace
of $F^n$. Moreover, all codes are considered to have length $n$, be linear and over the field $F$ and all matrices are
assume to be over $F$, unless specified otherwise explicitly. The following definition introduces the main concept of
interest in this paper.

\begin{defn}\label{main defn}
We call $C$ a linear complementary equi-dual code, or an \good\ code, if there exists a linear code $D$ with length $n$
over $F$ such that $D$ is permutation equivalent to $C^{\perp}$ and $(C,D)$ is an LCP of codes, that is, $F^n=C\oplus
D$.
\end{defn}

Clearly every LCD code is an \good\ code. Note that one could use more general concepts of equivalence, such as
monomial equivalence, instead of permutation equivalence in the definition of \good\ codes, especially when we consider
fields with more than 2 elements. But if we used monomial equivalence, Conjecture \ref{guess} implies that all codes
would be \good\ when $F\neq F_2$ (see Proposition \ref{consequence}). The following example shows that using
permutation equivalence there exist non-\good\ codes over fields other than $F_2$.
\begin{ex}\label{ex1}
The code $C$ generated by the matrix $ G=\begin{psmat}
1 & 0& 2 \\
0& 1& 2
\end{psmat}$
over $F_3$ is not \good. Indeed, one can find all 6 codes permutation equivalent to $C^\perp$ and check that all have
nonzero intersection with $C$.
\end{ex}

To simplify checking whether a code is \good\ or not, we use the following characterization of LCP of codes. In what
follows $H^t$ denotes the transpose of a matrix $H$. Note that although in \cite{LiuL}, this theorem is stated for
finite fields, but exactly the same proof works for infinite fields.
\begin{thm}[{\cite[Theorem 2.6]{LiuL}}]\label{LCP mat}
Assume that $G$ is a generator matrix for $C$ and $H$ is a parity check matrix for the linear code $D$. Then $(C,D)$ is
an LCP of codes \ifof $\dim (C)+\dim(D) =n$ and $GH^t$ is nonsingular.
\end{thm}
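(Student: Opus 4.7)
The plan is to reduce the LCP condition to a pure linear-algebra statement about the square matrix $GH^t$, via the obvious parametrizations of $C$ and $D$.

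First I would set up notation: take $G$ to be $k\times n$ with linearly independent rows spanning $C$, so $\dim C=k$; take $H$ to be $(n-m)\times n$ of full row rank $n-m$, where $m=\dim D$, so that $D=\{v\in F^n\mid Hv^t=0\}$. With these shapes, $GH^t$ is $k\times(n-m)$, hence square exactly when $k+m=n$, i.e.\ when $\dim C+\dim D=n$. Since $C+D=F^n$ together with $C\cap D=0$ forces $\dim C+\dim D=n$, this dimension equality is automatic from the LCP hypothesis and is part of the criterion, so I can cleanly separate the square/nonsquare cases.

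Next I would observe that $C\cap D$ admits a transparent description: a vector $c$ lies in $C$ iff $c=xG$ for some row vector $x\in F^k$, and such a $c$ lies in $D$ iff $H c^t=0$, i.e.\ iff $HG^t x^t=0$. Because the rows of $G$ are linearly independent, the map $x\mapsto xG$ is injective, so
\[
C\cap D\neq 0 \iff \ker(HG^t)\neq 0.
\]
Using $HG^t=(GH^t)^t$, the right-hand side is equivalent to $GH^t$ being singular (once it is square).

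For the forward direction, assume $(C,D)$ is an LCP of codes. Then $\dim C+\dim D=n$, so $GH^t$ is square; and $C\cap D=0$ forces $\ker(GH^t)^t=0$, hence $GH^t$ is nonsingular. For the converse, assume $\dim C+\dim D=n$ and $GH^t$ is nonsingular. The displayed equivalence gives $C\cap D=0$, and then $\dim(C+D)=\dim C+\dim D-\dim(C\cap D)=n$ forces $C+D=F^n$. Thus $(C,D)$ is an LCP of codes.

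The only real point to watch is bookkeeping with transposes and with the distinction between the ``$\dim C+\dim D=n$'' clause and the nonsingularity clause; the former makes the matrix square and the latter then detects triviality of $C\cap D$. No deep step is involved — the whole argument is essentially the identification $C\cap D\cong\ker(GH^t)^t$ plus a dimension count.
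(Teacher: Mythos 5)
Your proof is correct and complete: the identification of $C\cap D$ with the kernel of $HG^t=(GH^t)^t$ via the injective parametrization $x\mapsto xG$, combined with the dimension count, is exactly the standard argument. The paper itself does not reproduce a proof (it cites the result from Liu and Liu, remarking only that the same proof works over infinite fields), and your argument indeed uses nothing about finiteness of $F$, so it also justifies that remark.
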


Therefore, we deduce the following characterization of \good\ codes via their generating matrices.
\begin{prop}\label{GC}
Let $G_{k\times n}$ be a generator matrix for $C$. Then $C$  is an \good\   code \ifof there exists a permutation
matrix $P_{n\times n}$ such that $GPG^t$ is invertible.
\end{prop}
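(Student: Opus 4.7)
The plan is to leverage Theorem \ref{LCP mat} directly, since it already characterizes when a pair $(C,D)$ is LCP via $G$ and the parity check matrix of $D$. The only work left is to translate the condition ``$D$ is permutation equivalent to $C^\perp$'' into a convenient matrix form and to identify a parity check matrix for such a $D$.

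First I would unfold the definition: $C$ is \good\ iff there exists a permutation matrix $P$ of size $n\times n$ such that $D := C^\perp P$ (viewing codewords as rows and $P$ acting on the right) satisfies $F^n = C\oplus D$. Then I need a parity check matrix for this particular $D$. The key observation is that permutations preserve the standard inner product, so $D^\perp = CP$. Indeed, for any $w\in F^n$, one has $w\in D^\perp$ iff $w(vP)^t = (wP^t)v^t = 0$ for every $v\in C^\perp$, iff $wP^t\in (C^\perp)^\perp = C$, iff $w\in CP$, where I use $P^{-1}=P^t$ because $P$ is a permutation matrix. Consequently, a generator matrix of $D^\perp$, i.e.\ a parity check matrix of $D$, is $GP$.

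Next I would apply Theorem \ref{LCP mat} to $C$ and $D$ with $H := GP$. The dimension condition $\dim C + \dim D = n$ is automatic since permutation equivalence preserves dimension and $\dim C^\perp = n-\dim C$. So the theorem gives: $(C,D)$ is an LCP of codes iff $GH^t = G(GP)^t = GP^tG^t$ is invertible. Therefore $C$ is \good\ iff there is a permutation matrix $P$ such that $GP^tG^t$ is invertible.

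To finish, I would note that the map $P\mapsto P^t$ is a bijection on the set of $n\times n$ permutation matrices (since $P^t = P^{-1}$ is itself a permutation matrix). Hence ``there exists a permutation matrix $P$ with $GP^tG^t$ invertible'' is equivalent to ``there exists a permutation matrix $P$ with $GPG^t$ invertible'', which is the stated condition. The only potentially subtle step is the identification $D^\perp = CP$ and keeping the sides of multiplication straight; once that is in place the rest is a direct invocation of Theorem \ref{LCP mat}.
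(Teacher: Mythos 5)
Your proof is correct and follows essentially the same route as the paper: both identify a parity check matrix of the permuted dual as $G$ times a permutation matrix and then invoke Theorem \ref{LCP mat}. You merely spell out the justification that $D^\perp = CP$ (which the paper leaves implicit) and reparametrize $P\mapsto P^t$ at the end, which is harmless since transposition is a bijection on permutation matrices.
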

\begin{proof}
Note that $D$ is a code permutation equivalent to $C^{\perp}$ \ifof $D$ has a parity check matrix of the form $GP$ or
equivalently $H=GP^t$ for a permutation matrix $P$. Thus $GPG^t=GH^t$ is invertible \ifof $C$ is an \good\ code by
Theorem \ref{LCP mat}.
\end{proof}

Using Proposition \ref{GC}, it is much easier to check that the code $C$ in Example \ref{ex1} is \good. We just have to
check that for all six $3$-square permutation matrices $P$, the matrix $GPG^t$ is not invertible.

\begin{defn}\label{defn mat LCED}
Let $G_{k\times n}$ be a matrix over the arbitrary field $F$. We call $G$ an \good\  matrix, when it is the generator
matrix of an
\good\ code. Equivalently, $G$ is an \good\ matrix, if there exist a permutation matrix $P_{n\times n}$ such that $GPG^t$ is invertible.
\end{defn}
In this paper, we mainly work with matrices and study when a given matrix is \good.
\begin{ex}\label{ex2}
Any full rank matrix $G$ over the field of real numbers is an \good\  matrix, because it is well-known that $GG^t$ is
always invertible.
\end{ex}
\begin{ex}\label{ex3}
Let $G_{4\times 6}$ be the following matrix over $F_2$.

$$
G=\left(
  \begin{array}{cc: cc : cc}
    1 & 0 & 0 & 0 &  1& 1 \\
   0 & 1 & 0 & 0 & 1 & 0 \\\hdashline
    0 &0 & 1& 0 &0 & 1 \\
    0 & 0 & 0 & 1 & 1 & 1 \\
  \end{array}
\right)=
\left(
  \begin{array}{c:c:c}
   I_2 & 0 & A \\\hdashline
    0 & I_2 & B \\
  \end{array}
\right) .$$ We have $$GG^t= \left(
  \begin{array}{cccc}
    1 & 1  &1 & 0 \\
   1 & 0 & 0 & 1 \\
    1 & 0 & 0 & 1 \\
    0 & 1 & 0 & 1 \\
  \end{array}
\right)$$ and $\det (GG^t)=0$. Despite this, $G$ is an \good\  matrix since for the permutation matrix $P$
corresponding to the permutation $(1\ 5)(2\ 6)$, we have
$$ GPG^t=\left(
  \begin{array}{c:c:c}
   I_2 & 0 & A \\\hdashline
    0 & I_2 & B \\
  \end{array}
\right)\left(
  \begin{array}{c:c}
   A^t & B^t \\\hdashline
    0 & I_2  \\ \hdashline
I_2& 0\\
  \end{array}
\right)=\left(
           \begin{array}{cc}
             A+A^t & B^t \\
             B & I_2 \\
           \end{array}
         \right)
.$$ So $\det (GPG^t )=-(\det (B))^2=1$, because $A+A^t=0$.
\end{ex}

Recall that a code $C$ is called $\lambda$-constacyclic for a $\lambda \in F$, when from $(c_1,\ldots, c_n)\in C$, it
follows that $(\lambda c_n, c_1, \ldots, c_{n-1})\in C$. These codes correspond to the ideals of $F[x]/\langle
x^n-\lambda\rangle$. In \cite[Theorem II.4]{CarletG} it is proved that if $(C,D)$ is an LCP of codes and both $C$ and
$D$ are $\lambda$-constacyclic then $D$ is equivalent to $C^\perp$. In fact, they prove that in this case $D$ is the
reciprocal of $C^\perp$ and hence permutation equivalent to $C^\perp$. Similar results are proved for the case that $C$
and $D$ are $n$-dimensional cyclic codes, in \cite[Theorem III.6]{GuneriO}, and more generally for the case that $C$
and $D$ are both $G$-codes for a group $G$, in the main theorem of \cite{Borello}. Thus in all these cases $C$ is a
\good\ code. For the constacyclic case, Theorem II.1 of \cite{CarletG} presents a characterization of LCP $(C,D)$ of codes
with both $C$ and $D$ constacyclic. Combining these two theorems of \cite{CarletG} we deduce the following.

\begin{prop}\label{consta}
A $\lambda$-constacyclic code generated by $g(x)$ over a finite field is \good, if $\gcd(g(x),(x^n-\lambda)/g(x))=1$.
\end{prop}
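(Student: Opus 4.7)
The plan is to construct an explicit companion code $D$, verify that $(C,D)$ is an LCP using one theorem from \cite{CarletG}, then invoke the other to upgrade $D$ to a permutation equivalent of $C^\perp$.

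First, I would set $h(x) = (x^n-\lambda)/g(x)$ and let $D$ be the $\lambda$-constacyclic code generated by $h(x)$. Immediately $\dim C + \dim D = (n-\deg g) + (n-\deg h) = n$, so the only remaining condition for $(C,D)$ to be an LCP is that $C \cap D = 0$. Since both $C$ and $D$ are $\lambda$-constacyclic, this is exactly the setting of Theorem II.1 of \cite{CarletG}, whose characterization of LCP pairs of constacyclic codes amounts, in our chosen form of $D$, to the condition $\gcd\bigl(g(x), (x^n-\lambda)/g(x)\bigr) = 1$. This is precisely the hypothesis, so $(C,D)$ is an LCP of codes.

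Next, I would apply Theorem II.4 of \cite{CarletG} to this LCP: since $C$ and $D$ are both $\lambda$-constacyclic, $D$ must be the reciprocal of $C^\perp$, which is in particular permutation equivalent to $C^\perp$. Combining the two steps, the code $D$ just constructed is permutation equivalent to $C^\perp$ and satisfies $C \oplus D = F^n$, so by Definition \ref{main defn}, $C$ is \good.

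There is no real obstacle here; the proof is essentially an assembly of the two quoted theorems. The only points requiring a little care are to make sure the generator-polynomial convention used in \cite{CarletG} matches the one in the statement (so that the $\gcd$ hypothesis really is the LCP condition for the pair $\langle g\rangle, \langle h\rangle$), and to note that the finite-field hypothesis is needed only because the results of \cite{CarletG} being cited were stated in that generality.
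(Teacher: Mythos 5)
Your proposal is correct and follows essentially the same route as the paper, which likewise obtains the result by combining Theorem II.1 of \cite{CarletG} (to get that $C=\langle g\rangle$ and $D=\langle (x^n-\lambda)/g\rangle$ form an LCP under the $\gcd$ hypothesis) with Theorem II.4 of \cite{CarletG} (to get that $D$ is the reciprocal of $C^\perp$ and hence permutation equivalent to it). The dimension count and the final appeal to Definition \ref{main defn} are exactly as intended.
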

Note that the converse of this proposition is not correct. To give an example, we need the following results that state
a condition, under which, a cyclic code is \good.  Let $g(x)=\sum_{i=0}^{r} g_i x^i$ be the monic generating polynomial
with degree $r$ for a cyclic $[n,k]$-code $C$ where $n=3r$ and $k=2r$. Then $C$ has a generator matrix of the form
$G=\left(
\begin{array}{c:c:c}
   U_{r\times r} & L_{r\times r} & 0_{r\times r}\\ \hdashline
      0_{r\times r} & U_{r\times r} & L_{r\times r}\\
       \end{array}
        \right)  $, where
        $$ U=\left( \begin{array}{ccccc}
  g_0& g_1 & \ldots & g_{r-2}&g_{r-1} \\
     0& g_0&  \ldots & g_{r-3}&g_{r-2} \\
      \vdots& \vdots & \ddots & \vdots&\vdots \\
       0& 0 & \ldots & g_{0}&g_{1} \\
        0& 0 & \ldots &0&g_{0} \\
       \end{array}
        \right) \ \mathrm{and} \    L=\left( \begin{array}{ccccc}
  g_r& 0 & \ldots & 0&0 \\
     g_{r-1}& g_r&  \ldots & 0&0 \\
      \vdots& \vdots & \ddots & \vdots&\vdots \\
       g_2 & g_3 & \ldots & g_{r}&0 \\
        g_1& g_2 & \ldots & g_{r-1}&g_{r} \\
       \end{array}
        \right) .$$

\begin{lem}\label{25t}
With the above notations if $\chr F=2$, then $ UL^t+LU^t=0$ \ifof  $g(x)g^{\ast}(x)=g_0+g(1)^2x^r+g_0 x^{2r}$, where
$g^\ast(x)=x^rg(\f{1}{x})$.
\end{lem}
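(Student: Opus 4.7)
The plan is to compute $UL^t$ directly and identify its entries with coefficients of the polynomial $g(x)g^{\ast}(x)$. First I would read off the entries from the given forms: $U_{ik}=g_{k-i}$ for $k\ge i$ (else $0$) and $L_{jk}=g_{r-j+k}$ for $k\le j$ (else $0$). Multiplying gives $(UL^t)_{ij}=\sum_{k=i}^{j}g_{k-i}\,g_{r-j+k}$ when $j\ge i$ and $0$ otherwise; after the substitution $l=k-i$ this becomes $\sum_{l=0}^{j-i}g_l\,g_{r-(j-i)+l}$. The crucial observation is that this depends only on $d:=j-i$, so $UL^t$ is upper triangular with a single value on each diagonal.

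Next I would exploit characteristic $2$. Since $LU^t=(UL^t)^t$, the matrix $UL^t+LU^t$ equals $UL^t+(UL^t)^t$, and in characteristic $2$ this vanishes iff $UL^t$ itself is symmetric. But an upper triangular symmetric matrix is diagonal, and the diagonal contribution $2g_0g_r$ is already zero. Hence $UL^t+LU^t=0$ is equivalent to the vanishing of the strict upper diagonals, i.e.\ to the condition $\sum_{l=0}^{d}g_l\,g_{r-d+l}=0$ for every $d=1,\ldots,r-1$.

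The final step is to recognise these sums as the nontrivial coefficients of $g(x)g^{\ast}(x)$. Writing $g^{\ast}(x)=\sum_{j=0}^{r}g_{r-j}x^j$ and multiplying, the coefficient $c_m$ of $x^m$ for $0\le m\le r$ equals $\sum_{l=0}^{m}g_l\,g_{r-m+l}$. Evaluating the boundary cases: $c_0=g_0g_r=g_0$ (since $g$ is monic), and $c_r=\sum_l g_l^2=\bigl(\sum_l g_l\bigr)^2=g(1)^2$ by the Frobenius identity. Moreover $gg^{\ast}$ is self-reciprocal of degree $2r$ because $x^r g^{\ast}(1/x)=g(x)$, so $c_{2r-m}=c_m$; in particular $c_{2r}=g_0$, and the vanishing of $c_m$ for $1\le m\le r-1$ is equivalent to the vanishing of all $c_m$ with $m\notin\{0,r,2r\}$. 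Combining with the previous paragraph, $UL^t+LU^t=0$ if and only if $g(x)g^{\ast}(x)=g_0+g(1)^2x^r+g_0x^{2r}$.

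I expect the only real obstacle to be keeping indices straight when multiplying $U$ by $L^t$ and matching the result to the coefficients of $g(x)g^{\ast}(x)$; once the Toeplitz-diagonal structure of $UL^t$ is in hand, every subsequent step is essentially a rewriting, with characteristic $2$ doing the rest through the Frobenius collapse of $\sum g_l^2$ and the cancellation of the diagonal.
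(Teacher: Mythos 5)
Your proof is correct and follows essentially the same route as the paper: compute $UL^t$ as an upper triangular Toeplitz matrix whose $d$-th superdiagonal is $c_d=\sum_{l=0}^{d}g_lg_{r-d+l}$, identify these with the coefficients of $g(x)g^{\ast}(x)$, and finish via self-reciprocity of $gg^{\ast}$, the Frobenius identity $\sum g_l^2=g(1)^2$, and the vanishing of the diagonal $2g_0$ in characteristic $2$. Your version just spells out the index bookkeeping more explicitly than the paper does.
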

\begin{proof}
We have
         $$ (LU^t)^t=UL^t=\left( \begin{array}{ccccc}
  g_0& c_1 & c_2 &\ldots & c_{r-1} \\
    0& g_0& c_1&  \ldots &  c_{r-2} \\
      \vdots& \vdots & \ddots & \vdots&\vdots \\
       0 & 0 & 0 & \ldots & g_{0} \\
       \end{array}
        \right)  ,$$
where $c_i=g_0 g_{r-i}+g_1g_{r-(i-1)}+\cdots +g_ig_r$, for $1\leq i \leq r-1$. Thus $c_i$ is the coefficient of $x^i$
in $g(x)g^{\ast}(x)$. Also note that the coefficient of $x^r$ in $g(x)g^\ast (x)$ is $g_0^2+g_1^2+ \cdots+ g_r^2=
(g(1))^2$. Since $\deg gg^{\ast} =2r$ and $g(x)g^{\ast}(x)$ is self reciprocal, so $c_1=c_2=\cdots=c_{r-1}=0$ \ifof
$g(x)g^{\ast}(x)=g_0+g(1)^2x^r+g_0 x^{2r}$. Also the entries on the main diagonal of $UL^t+LU^t$ are $2g_0=0$, because
$\chr F=2$.
\end{proof}

\begin{prop}\label{reciprocal}
Let $C$ be a cyclic $[n,k]$-code over $F$, $\chr F=2$, and $g(x)=\sum_{i=0}^{r} g_i x^i$ be the monic generating
polynomial of $C$ with $\deg g(x)=r$ and assume that $n=3r$ and $k=2r$. If $g(x)g^{\ast}(x)=g_0+g(1)^2x^r+g_0 x^{2r}$,
then $C$ is an
\good\ code.
\end{prop}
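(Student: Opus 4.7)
The plan is to use Proposition \ref{GC}, so I would look for a specific permutation matrix $P$ such that $GPG^t$ is invertible, taking $G$ to be the block generator matrix given before Lemma \ref{25t}. Example \ref{ex3} is very suggestive: there the permutation swapping the first ``block'' of columns with the last ``block'' produced a nice $2\times 2$ block decomposition whose invertibility followed from the hypothesis $A+A^t=0$. Since the current hypothesis $g(x)g^\ast(x)=g_0+g(1)^2 x^r+g_0 x^{2r}$ translates via Lemma \ref{25t} into $UL^t+LU^t=0$, which is the analogue of the condition in the example, I expect the same type of permutation to work.

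Concretely, I would take $P$ to be the permutation matrix corresponding to the product of transpositions $(1\ 2r{+}1)(2\ 2r{+}2)\cdots(r\ 3r)$, so that multiplying on the right by $P$ swaps the first and third block-columns of $G$, giving
\[
GP=\begin{pmatrix} 0 & L & U\\ L & U & 0 \end{pmatrix}.
\]
Multiplying by
\[
G^t=\begin{pmatrix} U^t & 0\\ L^t & U^t\\ 0 & L^t \end{pmatrix}
\]
and expanding the four blocks directly yields
\[
GPG^t=\begin{pmatrix} LL^t & LU^t+UL^t\\ UL^t+LU^t & UU^t\end{pmatrix}.
\]
By Lemma \ref{25t} the off-diagonal blocks vanish, so $GPG^t=\mathrm{diag}(LL^t,UU^t)$.

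All that remains is to verify that this block-diagonal matrix is invertible. Here $L$ is lower triangular with $g_r=1$ on the diagonal (since $g$ is monic), so $\det L=1$; and $U$ is upper triangular with $g_0$ on the diagonal, so $\det U=g_0^r$. I need to argue $g_0\neq 0$: since $g(x)$ divides $x^n-1$ and $x$ does not divide $x^n-1$, the constant term $g_0$ of $g(x)$ is nonzero. Hence $\det(GPG^t)=(\det L)^2(\det U)^2=g_0^{2r}\neq 0$, and $C$ is \good\ by Proposition \ref{GC}.

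The only slightly delicate step is the block multiplication giving the precise form $LU^t+UL^t$ on the off-diagonal; everything else is bookkeeping. The hypothesis is tailored via Lemma \ref{25t} to kill exactly these off-diagonal blocks, and the triangular structure of $U,L$ together with $g_0\neq 0$ makes the determinant computation immediate, so I do not anticipate a real obstacle.
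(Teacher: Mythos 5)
Your proof is correct and takes essentially the same route as the paper: the same block-swapping permutation $(1\ 2r{+}1)\cdots(r\ 3r)$, the same application of Lemma \ref{25t} to annihilate the off-diagonal blocks, and the same triangular determinant computation $\det(GPG^t)=(\det L)^2(\det U)^2$. The only cosmetic difference is that the paper simply asserts $g_0,g_r\neq 0$ because $g$ is the generating polynomial, while you justify $g_0\neq 0$ via $x\nmid x^n-1$ and use monicity to get $\det L=1$.
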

\begin{proof}
Assume that $G, U$ and $L$ are as above. Let $P_{n\times n}$ be the permutation matrix for which
 $GP^t=\left( \begin{array}{c:c:c}
  0 & L & U\\ \hdashline
      L & U & 0\\
       \end{array}
        \right)$
        and so $PG^t=\left( \begin{array}{c:c}
  0 & L^t \\ \hdashline
      L^t & U^t \\ \hdashline
      U^t & 0 \\
       \end{array}
        \right)$.
        Thus
        $$ GPG^t=\left( \begin{array}{c:c}
 L L^t & UL^t+LU^t\\ \hdashline
     UL^t+LU^t & UU^t \\
       \end{array}
        \right)  .$$
By Lemma \ref{25t}, as $g(x)g^{\ast}(x)=g_0+g(1)^2x^r+g_0 x^{2r}$, we have $ UL^t+LU^t=0$. Thus $\det (GPG^t)=(\det
(L))^2(\det (U))^2=g_0^{2r}g_r^{2r}$. Since $g$ is the generating polynomial, $g_0,g_r\neq 0$ and $GPG^t$ is
invertible. Consequently, $C$ is an \good\ code.
\end{proof}
\begin{ex}\label{exre}
Let $C=\langle\langle x^4 +x^2 +1 \rangle\rangle $ be a cyclic code of length 12 over $F_2$. We have $g(x)g^{\ast}(x)=
(x^4+x^2 +1)^2 = x^8+x^4 +1$. We have $g(x)g^{\ast}(x)=g_0+g(1)^2x^r+g_0 x^{2r}$. By Proposition \ref{reciprocal}, $C$
is an \good\ code. Note that $g(x)^2|x^{12}-1$ and hence $\gcd(g(x), (x^{12}-1)/g(x))= g(x)\neq 1$. Therefore, there is
no LCP of codes $(C,D)$ such that $D$ is also cyclic according to \cite[Theorem II.1]{CarletG}.
\end{ex}

Next, we study some elementary properties of \good\ matrices.
\begin{lem}\label{3e}
Let $G_{k\times n}$ be a matrix. The following are equivalent.
\begin{enumerate}
  \item $G$ is an \good\  matrix.
  \item The matrix $EG$ is \good, where $E$ is an invertible matrix.
  \item The matrix $GP$ is \good, where $P$ is a permutation matrix.
\end{enumerate}
In this case, 
$\rank G=k$.
\end{lem}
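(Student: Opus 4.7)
The plan is to prove the equivalence in a cycle-free manner by showing (i)$\iff$(ii) and (i)$\iff$(iii) directly, which is the cleanest route since each case reduces to manipulating the expression $GPG^t$ for a permutation matrix $P$. The key algebraic observation is that under both operations the new ``inner product" matrix is obtained from $GPG^t$ by an invertible conjugation-like action, so invertibility is preserved.

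For (i)$\iff$(ii), I would fix an invertible $E$ and simply compute
$$(EG)P(EG)^t = E(GPG^t)E^t.$$
Since $E$ and $E^t$ are invertible, the right-hand side is invertible \ifof $GPG^t$ is. Hence a permutation matrix $P$ witnessing \good ness of $G$ also works for $EG$, and vice versa.

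For (i)$\iff$(iii), I would use the fact that conjugation of a permutation matrix by a permutation matrix is again a permutation matrix. Concretely, for a permutation matrix $Q$ and any permutation matrix $P'$,
$$(GQ)P'(GQ)^t = G(QP'Q^t)G^t,$$
and $QP'Q^t$ ranges over all $n\times n$ permutation matrices as $P'$ does (with inverse map $P' \mapsto Q^tP'Q$). Thus the set of matrices of the form $(GQ)P'(GQ)^t$ equals the set of matrices of the form $GPG^t$, so one contains an invertible element \ifof the other does.

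Finally, for the rank statement, assume $G$ is \good\ and pick a permutation matrix $P$ with $GPG^t$ invertible. Then $GPG^t$ is a $k\times k$ invertible matrix, so $\rank(GPG^t)=k$; since rank cannot increase under multiplication, $k=\rank(GPG^t)\leq \rank G\leq k$, forcing $\rank G=k$. No step here looks like a genuine obstacle; the only thing to be careful about is noting that conjugation by a permutation matrix preserves the class of permutation matrices, so that the bijection $P'\leftrightarrow QP'Q^t$ in the (i)$\iff$(iii) part actually stays inside the allowed set.
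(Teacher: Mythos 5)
Your proposal is correct and follows essentially the same route as the paper: the identity $(EG)P(EG)^t=E(GPG^t)E^t$ for (i)$\iff$(ii), the observation that conjugation by a permutation matrix preserves permutation matrices for (i)$\iff$(iii), and the rank inequality $\rank(GPG^t)\leq\rank G$ for the final claim. The only cosmetic difference is that you package (i)$\iff$(iii) as a single bijection $P'\mapsto QP'Q^t$ where the paper argues the two implications separately.
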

\begin{proof}
(i$\leftrightarrow$ii) For the invertible matrix $E$, the matrix $EG$ is \good\  \ifof there exists the permutation
matrix $P'_{n\times n}$ such that $(EG)P'(EG)^t$ is invertible \ifof $E(GP'G^t)E^t$ is invertible \ifof $GP'G^t$ is
invertible \ifof $G$ is an \good\  matrix.

(iii$\rightarrow $i) Assume that $P$ is a permutation matrix and $GP$ is
\good. Thus there exists a permutation matrix $P'$ such that $(GP)P'(GP)^t=G(PP'P^t)G^t$ is invertible. Since $PP'P^t$
is a permutation matrix, $G$ is an \good\  matrix.

(i$\rightarrow $iii) Suppose that $G$ is an \good\  matrix. For any permutation matrix $P$, the matrix $(GP)P^{-1}$ is
\good. Since $P^{-1}$ is a permutation matrix, $GP$ is \good\ by (iii$\rightarrow$i) above.

For the final statement, if $\rank G<k$, then for any permutation matrix $P$ we have $\rank (GPG^t)\leq \rank G < k$.
So $GPG^t$ is not invertible and  $G$ is not \good.
\end{proof}
We know that every matrix $G_{k\times n}$ with rank $k$ can be transformed to a \emph{standard form} as $G'=( I_k | A)
_{k\times n}$ which is obtained by  applying a permutation on the columns of $G$ and some row-elementary operations on
$G$, that is, $G'=EGP$ for a permutation matrix $P$ and an invertible matrix $E$. Thus from Lemma \ref{3e} we
immediately deduce the following.
\begin{cor}\label{c1}
Let $G_{k\times n}$ be a matrix with rank $k$. Then $G$ is an \good\  matrix \ifof  a standard form of $G$ is \good.
\end{cor}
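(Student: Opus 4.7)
The plan is essentially to chain together the two equivalences from Lemma \ref{3e} and apply them to the relation $G' = EGP$ that defines the standard form. Since everything has already been done, the argument should be a few lines.

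First I would record the setup: by hypothesis $\rank G = k$, so there exist an invertible matrix $E$ and a permutation matrix $P$ with $G' = EGP$ in standard form $(I_k\mid A)$. This is a statement about elementary row operations plus a reordering of columns and requires no further justification here.

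Next, I would peel off the two factors one at a time using Lemma \ref{3e}. By the equivalence (i)$\leftrightarrow$(iii), $G$ is \good\ if and only if $GP$ is \good. Then by (i)$\leftrightarrow$(ii) applied to the matrix $GP$, we see that $GP$ is \good\ if and only if $E(GP) = EGP = G'$ is \good. Composing these two equivalences yields $G$ is \good\ if and only if $G'$ is \good.

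There is essentially no obstacle; the only point worth a sentence in the writeup is that Lemma \ref{3e} is stated as an equivalence, so both directions of the corollary come out simultaneously, and there is no need to reverse the roles of $G$ and $G'$ (which would in any case only require noting that $G = E^{-1}G'P^{-1}$ with $E^{-1}$ invertible and $P^{-1}$ a permutation matrix).
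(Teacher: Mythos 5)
Your proposal is correct and follows exactly the paper's intended argument: the paper also derives this corollary immediately from the decomposition $G'=EGP$ together with the equivalences of Lemma \ref{3e}. Your write-up merely makes explicit the two-step chaining (first the permutation factor via (i)$\leftrightarrow$(iii), then the invertible factor via (i)$\leftrightarrow$(ii)), which is a harmless elaboration of the same proof.
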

\begin{cor}\label{c2}
Let $C$ be an \good\ code. Then any code permutation equivalent to $C$ is \good.
\end{cor}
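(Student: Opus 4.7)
The plan is to reduce the statement directly to Lemma \ref{3e}(iii), which already does essentially all the work. First I would fix a generator matrix $G$ for $C$; since $C$ is \good, the matrix $G$ is an \good\ matrix in the sense of Definition \ref{defn mat LCED}.

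Next, I would recall what it means for a code $C'$ to be permutation equivalent to $C$: there exists an $n\times n$ permutation matrix $P$ such that $C' = \{vP : v\in C\}$. Consequently, $GP$ is a generator matrix for $C'$. By part (iii) of Lemma \ref{3e}, since $G$ is \good, the matrix $GP$ is also \good. Therefore $C'$ is generated by an \good\ matrix, and by Definition \ref{defn mat LCED} this means $C'$ is itself an \good\ code.

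There is no real obstacle here; the corollary is essentially a restatement of the permutation-invariance direction of Lemma \ref{3e} at the level of codes rather than matrices. The only minor bookkeeping is the routine observation that multiplying a generator matrix on the right by a permutation matrix yields a generator matrix for the permutation-equivalent code, which is standard.
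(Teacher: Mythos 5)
Your proof is correct and matches the paper's intended argument: the paper states Corollary \ref{c2} as an immediate consequence of Lemma \ref{3e}, and your reduction via part (iii) of that lemma, together with the observation that $GP$ generates the permuted code, is exactly that deduction spelled out.
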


Note that by Lemma \ref{3e} if $G_{k\times n}$ is \good, then $k\leq n$. From now on, we assume that $k\leq n$ and by
$G$ we always denotes a $k\times n$ matrix.
\begin{lem}\label{3eg}
Let $A$ be a $k\times (n-k)$ matrix. The following are equivalent.
\begin{enumerate}
  \item The matrix $(
I_k | A)$ is \good.
  \item The matrix $
   (I_k | PA)$ is \good, where $P$ is a $k\times k$ permutation matrix.
  \item The matrix $ (
   I_k | AP') $ is \good, where $P'$ is a $(n-k)\times (n-k)$ permutation matrix.
\end{enumerate}
\end{lem}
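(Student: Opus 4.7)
The plan is to reduce each of (i) $\Leftrightarrow$ (ii) and (i) $\Leftrightarrow$ (iii) to Lemma \ref{3e} by expressing $(I_k\mid PA)$ and $(I_k\mid AP')$ as products of $(I_k\mid A)$ with suitable block permutation or invertible matrices. Intuitively, permuting the rows or columns of the parity part $A$ amounts, up to an honest column permutation of the full $k\times n$ matrix and possibly a row operation, to the original $(I_k\mid A)$, which is exactly the type of modification tolerated by Lemma \ref{3e}.

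For (i) $\Leftrightarrow$ (iii), I would use the identity
$$
(I_k\mid A)\begin{pmatrix} I_k & 0 \\ 0 & P'\end{pmatrix}=(I_k\mid AP'),
$$
which displays $(I_k\mid AP')$ as $(I_k\mid A)Q$, where $Q=\mathrm{diag}(I_k,P')$ is itself an $n\times n$ permutation matrix. Part (iii) of Lemma \ref{3e} then yields the equivalence immediately.

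For (i) $\Leftrightarrow$ (ii), the trick is that left-multiplying $(I_k\mid A)$ by $P$ scrambles the identity block, and we restore the identity by an appropriate column permutation. Explicitly,
$$
(I_k\mid PA)\begin{pmatrix} P & 0 \\ 0 & I_{n-k}\end{pmatrix}=(P\mid PA)=P(I_k\mid A).
$$
Applying Lemma \ref{3e}(iii) to the permutation matrix $\mathrm{diag}(P,I_{n-k})$ shows that $(I_k\mid PA)$ is \good\ \ifof $P(I_k\mid A)$ is \good, and, since $P$ is invertible, Lemma \ref{3e}(ii) then gives $P(I_k\mid A)$ is \good\ \ifof $(I_k\mid A)$ is \good.

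There is no real obstacle here: once the two block-matrix identities above are written down, the rest is a formal consequence of Lemma \ref{3e}. The only subtlety worth noting is that, unlike (iii) where a single application of Lemma \ref{3e}(iii) suffices, the modification $(I_k\mid A)\mapsto(I_k\mid PA)$ in (ii) simultaneously reshapes the left block, so one must combine both parts (ii) and (iii) of Lemma \ref{3e}.
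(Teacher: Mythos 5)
Your proof is correct and follows essentially the same route as the paper: both parts reduce to Lemma \ref{3e} via the block identities $(I_k\mid AP')=(I_k\mid A)\,\mathrm{diag}(I_k,P')$ and $(I_k\mid PA)\,\mathrm{diag}(P,I_{n-k})=P(I_k\mid A)$, the latter combining parts (ii) and (iii) of that lemma exactly as the paper does (the paper merely writes the same identity with $P^{-1}$ on the other side).
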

\begin{proof}
Set $G=(
   I_k | A )$.\\
  (i$\leftrightarrow$ii) Since $P_{k\times k}$ is a permutation matrix, the matrix $Q=\left(
   \begin{array}{c:c}
    P^{-1} & 0 \\\hdashline
     0 & I_{ n-k}\\
     \end{array}
     \right)_{n\times n}$ is also a permutation matrix. We have
     $$(PG)Q
=(
 P | PA)
      \left(
     \begin{array}{c:c}
      P^{-1} & 0 \\\hdashline
     0 & I_{n- k}\\
     \end{array}
     \right)=
     (
      I_k | PA).$$
       By Lemma \ref{3e}, $G$ is \good\  \ifof the above matrix is \good.\\
  (i$\leftrightarrow$iii) Since $P'_{(n-k)\times (n- k)}$ is a permutation matrix, the matrix $\left(
   \begin{array}{c:c}
    I_{ k} & 0 \\\hdashline
     0 & P'\\
     \end{array}
     \right)_{n\times n}$ is also a permutation matrix. We have
     $$ (
   I_k | AP')=
    (
   I_k | A)
    \left(
     \begin{array}{c:c}
      I_{ k}  & 0 \\\hdashline
     0 & P'\\
     \end{array}
     \right).$$
      By Lemma \ref{3e}, $G$ is \good\  \ifof the above matrix is \good.
\end{proof}

\begin{rem}
Let $G=(
   I_k | A )_{k\times n}$ and $Q$ be an $(n-k)$-square permutation matrix. If $P=\left(
     \begin{array}{c:c}
      I_{ k}  & 0 \\\hdashline
     0 & Q\\
     \end{array}
     \right)_{n \times n}$, then $GPG^t= I_k+AQA^t$. Thus we have the following.
  \begin{enumerate}
    \item If $AQA^t=\lambda I_k$, then $GPG^t$ is a multiple of the identity matrix and if $\lambda \neq -1$, then
        $G$ is \good.
    \item If $AQA^t$ is a nilpotent matrix, then $G$ is an \good\  matrix. In particular, if  $AA^t$ is a nilpotent
        matrix, then $G$ is an \good\  matrix.
  \end{enumerate}
\end{rem}

In the sequel, we denote permutation equivalence of codes by $\simeq$.
\begin{prop}\label{c2d}
If $C$ is an \good\  code, then $C^{\perp}$ is also an \good\  code.
\end{prop}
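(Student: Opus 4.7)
The plan is to take the given decomposition $F^n = C \oplus D$ with $D$ permutation equivalent to $C^\perp$, and dualize it. I need to produce a code $D'$ that is permutation equivalent to $(C^\perp)^\perp = C$ and forms an LCP with $C^\perp$. The natural candidate is $D' = D^\perp$.

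First I would verify that $(C^\perp, D^\perp)$ is an LCP of codes whenever $(C,D)$ is. This is a general dualization fact: from $\dim C + \dim C^\perp = n$ and the standard identities $(U+V)^\perp = U^\perp \cap V^\perp$ and $(U\cap V)^\perp = U^\perp + V^\perp$, the conditions $C \cap D = 0$ and $C + D = F^n$ become $C^\perp + D^\perp = F^n$ and $C^\perp \cap D^\perp = 0$ respectively. So this step is essentially a one-line dimension check.

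Second, and this is the step that needs a little care, I would show that $D^\perp \simeq C$. Write $D = CP \cdot$-image, meaning $D = \{cP : c \in C^\perp\}$ for some permutation matrix $P$. Using the standard inner product, for $y \in F^n$ we have $\langle cP, y\rangle = cP y^t = c\,(yP^t)^t$, so $y \in D^\perp$ iff $yP^t \in (C^\perp)^\perp = C$, iff $y \in CP$ (using $PP^t = I$ for permutation matrices). Hence $D^\perp = CP$, which is permutation equivalent to $C$.

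Combining the two steps, $(C^\perp, D^\perp)$ is an LCP and $D^\perp$ is permutation equivalent to $(C^\perp)^\perp$, so $C^\perp$ is an \good\ code by Definition \ref{main defn}. The main (minor) obstacle is the bookkeeping in the second step: making sure the permutation matrix that conjugates $D$ to $C^\perp$ also conjugates $D^\perp$ to $C$, which relies crucially on the orthogonality $P^{-1} = P^t$ of permutation matrices. No reliance on finiteness of $F$ or any other special structure is needed.
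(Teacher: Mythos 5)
Your proposal is correct and follows essentially the same route as the paper: dualize the decomposition, take $D'=D^\perp$, and use the identities $(C+D)^\perp=C^\perp\cap D^\perp$ and $(C\cap D)^\perp=C^\perp+D^\perp$ together with $D\simeq C^\perp\Rightarrow D^\perp\simeq C$. The only difference is that you spell out the verification that a permutation carrying $C^\perp$ to $D$ also carries $C$ to $D^\perp$, which the paper asserts without comment.
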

\begin{proof}
Suppose that $(C,D)$ is an LCP of codes and $D\simeq C^{\perp}$. Thus $C+D=F_q^n$ and $C\cap D={0}$. So
$(C^{\perp})^{\perp} \simeq D^{\perp}$,
$${0}=(F_q^n)^{\perp}=(C+D)^{\perp}=C^{\perp}\cap D^{\perp}$$
and
$$F_q^n=({0})^{\perp}=(C\cap D)^{\perp}=C^{\perp}+ D^{\perp}.$$
\end{proof}

\begin{lem}\label{gt}
The matrix $G_1=(I_k | A )_{k \times n}$ is \good\  \ifof $G_2=(
 I_{(n-k)} | -A^t )_{(n-k) \times n}$ is an \good\  matrix.
\end{lem}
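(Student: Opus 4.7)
The plan is to reduce this to the duality statement already established in Proposition \ref{c2d}, using the standard description of the dual code in terms of a parity check matrix and invoking the column-permutation invariance of \good ness from Lemma \ref{3e}.

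The first step is to recognize $G_2$ as a generator matrix of a code permutation equivalent to $C^\perp$, where $C$ is the code generated by $G_1$. Since $G_1=(I_k\mid A)$ is in standard form, a parity check matrix for $C$, i.e.\ a generator matrix for $C^\perp$, is $H=(-A^t\mid I_{n-k})$. Let $P_0$ be the $n\times n$ permutation matrix that swaps the first $n-k$ columns with the last $k$ columns (block form $\bigl(\begin{smallmatrix}0 & I_k\\ I_{n-k} & 0\end{smallmatrix}\bigr)$). Then a direct computation shows $H P_0 = (I_{n-k}\mid -A^t) = G_2$.

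Now I chain the equivalences. By definition $G_1$ is \good\ iff $C$ is \good. By Proposition \ref{c2d}, this holds iff $C^\perp$ is \good, which, since $H$ generates $C^\perp$ with full rank $n-k$, is equivalent to $H$ being an \good\ matrix. Finally, applying Lemma \ref{3e}(iii) to the permutation matrix $P_0$, the matrix $H$ is \good\ iff $HP_0 = G_2$ is \good. Stringing these together yields $G_1$ \good\ $\iff$ $G_2$ \good.

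There is essentially no obstacle: the only thing to verify carefully is the identification $G_2 = HP_0$, which is a one-line column-swap computation. An alternative, more computational route would be to write an arbitrary $n\times n$ permutation matrix in the $k,(n-k)$ block form and compare $G_1 P G_1^t = P_{11}+P_{12}A^t + AP_{21}+AP_{22}A^t$ directly with $G_2 P' G_2^t$ for a suitably chosen permutation $P'$; but this is substantially clumsier, and the duality route above is cleaner and more in line with the style of the paper.
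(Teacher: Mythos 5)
Your proof is correct and follows essentially the same route as the paper: both identify the code generated by $G_2$ with (a permutation-equivalent copy of) $C^\perp$ and then invoke Proposition \ref{c2d}. You are in fact slightly more careful than the paper's own two-line proof, since you make explicit the column permutation $P_0$ relating $(-A^t\mid I_{n-k})$ to $G_2$ and justify it via Lemma \ref{3e}(iii), a step the paper glosses over.
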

\begin{proof}
$G_1$ is a generator matrix for a code $C$ \ifof $G_2$ is a generator matrix for $C^{\perp}$. Now the result follows
from Proposition \ref{c2d}.
\end{proof}

At the end of this section, we present an easy necessary condition on a matrix to be \good.
\begin{thm}\label{011}
If $G$ does not have full rank or if the sum of the entries on any row of $G$ is $0$ and  $(1,1,\ldots,1)$ is in the
row space of $G$, then $G$ is not
\good.
\end{thm}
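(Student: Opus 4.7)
The proof splits into two cases according to the two disjuncts of the hypothesis, and both turn out to be short linear algebra arguments.

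For the first case, when $G$ does not have full rank, there is nothing new to do: the final statement of Lemma~\ref{3e} already records that if $\rank G<k$, then $\rank(GPG^t)\le \rank G<k$ for every permutation matrix $P$, so no such $GPG^t$ can be invertible. I would simply refer to that remark.

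For the second case, I would exhibit, uniformly in $P$, a nonzero vector in the right kernel of $GPG^t$. Let $j=(1,1,\ldots,1)^t\in F^n$. The first hypothesis (every row of $G$ sums to zero) says exactly $Gj=0$, while the second hypothesis says that the row vector $(1,\ldots,1)=j^t$ lies in the row space of $G$, so there is a row vector $v\in F^k$ with $vG=j^t$; necessarily $v\neq 0$ since $j^t\neq 0$. The crucial observation is that every permutation matrix $P$ fixes the all-ones vector, i.e.\ $Pj=j$. Hence for any permutation matrix $P$,
\[
GPG^t v^t \;=\; GP(vG)^t \;=\; GPj \;=\; Gj \;=\; 0,
\]
so $v^t$ is a nonzero vector in the kernel of $GPG^t$, and $GPG^t$ is singular. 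Since this holds for every permutation matrix $P$, Definition~\ref{defn mat LCED} gives that $G$ is not \good.

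There is no real obstacle here; the only thing to be careful about is the direction of the identification (writing the all-ones row vector as $vG$ rather than $Gv^t$, so that the two pieces of the hypothesis meet correctly in the identity $G^tv^t=j$). Once that is set up, the invariance $Pj=j$ does the entire job.
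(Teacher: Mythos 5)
Your proof is correct and is essentially identical to the paper's: the rank-deficient case is delegated to Lemma~\ref{3e}, and for the second case the paper likewise takes $v$ with $vG=(1,\ldots,1)$ and computes $GPG^tv^t=GP(1,\ldots,1)^t=G(1,\ldots,1)^t=0$, using exactly the invariance of the all-ones vector under permutation matrices. Your write-up merely makes the step $Pj=j$ explicit, which the paper leaves implicit.
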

\begin{proof}
The claim on the rank of $G$ was proved in Lemma \ref{3e}. Suppose that $vG=(1,\ldots,1)$ for a nonzero vector $v$. For
every permutation matrix $P$, we have
  $$GPG^tv^t=GP\left(
           \begin{array}{c}
             1 \\
             1\\
            \vdots \\
            1 \\
           \end{array}
         \right)=
         G\left(
           \begin{array}{c}
             1 \\
             1\\
            \vdots \\
            1 \\
           \end{array}
         \right)=0.
   $$
So $GPG^t$ is not invertible.
\end{proof}

Note that if $G$ is in standard form and satisfies the condition of the previous theorem, then the vector $v$ in the
proof must be $(1,\ldots,1)$, that is, the sum of the entries on any column of $G$ is equal to 1. This is exactly the
case for the matrix $G$ in Example \ref{ex1}.

\section{A conjecture}

We used CoCoA computer software (\cite{cocoa}) to find non-\good\ matrices such as $G_{k\times n}$ with $k\leq 3$  and
$n\leq 7$ and over finite fields $F_p$ where $p<20$ is a prime. We observed that all such matrices satisfy the
conditions of Theorem \ref{011}. Using Lemma \ref{gt}, this can be extended to all $k\leq 7$. Thus we pose the
following conjecture, and throughout this section, we show it is correct in several cases.

\begin{conj}\label{guess}
We conjecture that the converse of Theorem \ref{011} is true. In particular, if $G$ is in standard form and is not
\good, then the sum of the entries on any row of $G$ is $0$ and the sum of the entries on any column of $G$ is $1$.
\end{conj}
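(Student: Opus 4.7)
\noindent\textbf{Proof proposal for Conjecture \ref{guess}.} The plan is to reduce to the standard form $G = (I_k \mid A)$ and then extract constraints on the columns of $G$ by analyzing $\det(GPG^t)$ for carefully chosen permutations $P$, starting with transpositions. By Corollary \ref{c1}, we may assume $G = (I_k \mid A)$; set $M := GG^t$ and let $g_1, \ldots, g_n$ denote the columns of $G$. The goal is to show that every row of $A$ sums to $-1$ and every column of $A$ sums to $1$.

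The central observation is that for any transposition $\tau = (i\ j)$ of $\{1, \ldots, n\}$, one has $P_\tau - I_n = -(e_i - e_j)(e_i - e_j)^t$, hence $GP_\tau G^t = M - (g_i - g_j)(g_i - g_j)^t$. Since $\det M = 0$ (take $P = I_n$) and $\det(GP_\tau G^t) = 0$ by assumption, the matrix determinant lemma $\det(M - vv^t) = \det M - v^t \mathrm{adj}(M)\,v$ forces
\[
(g_i - g_j)^t \mathrm{adj}(M)\,(g_i - g_j) = 0 \quad \text{for all } 1 \le i, j \le n.
\]
When $\rank M = k - 1$, one has $\mathrm{adj}(M) = \alpha w w^t$ for some $\alpha \neq 0$ and some $0 \neq w \in \ker M$, so the identities collapse to $w^t g_i$ being independent of $i$. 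Writing this common value as $c$, we get $w^t G = c \mathbf{1}_n^t$; full row rank of $G$ rules out $c = 0$, so after rescaling $w^t G = \mathbf{1}_n^t$. Reading the first $k$ coordinates forces $w = \mathbf{1}_k$, and the remaining coordinates yield $\mathbf{1}_k^t A = \mathbf{1}_{n-k}^t$, which is the column-sum condition. This in turn gives $G^t \mathbf{1}_k = \mathbf{1}_n$, and $Mw = 0$ then forces $G \mathbf{1}_n = 0$, which is the row-sum condition. Thus the conjecture holds whenever $\mathrm{corank}(M) = 1$.

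The main obstacle is the case $\mathrm{corank}(M) \ge 2$, where $\mathrm{adj}(M)$ vanishes identically and the transposition test becomes empty. My plan here is to iterate the idea using permutations of larger support: for a product $\sigma$ of $r$ disjoint transpositions $(i_1\ j_1) \cdots (i_r\ j_r)$, one has $GP_\sigma G^t = M - \sum_{s=1}^{r}(g_{i_s} - g_{j_s})(g_{i_s} - g_{j_s})^t$, a symmetric rank-$\le r$ update whose determinant expands as an alternating sum involving the higher compound adjugates $\mathrm{adj}^{(j)}(M)$ contracted with the vectors $g_{i_s} - g_{j_s}$. Combined with the Cauchy--Binet expansion of $\det(GP_\sigma G^t)$ in terms of the $k \times k$ minors of $G$, the vanishing of these expressions for every $\sigma$ should translate, via induction on $\mathrm{corank}(M)$, into Pl\"ucker-type relations that eventually put $\mathbf{1}_n \in \ker G$ and $\mathbf{1}_n \in \mathrm{rowspace}(G)$. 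The hard part will be ruling out exotic degenerate configurations of the columns of $G$ when $\mathrm{corank}(M)$ is large; I expect this step to require substantially new ideas, perhaps via the $S_n$-action on $\mathrm{span}\{GPG^t : P\text{ permutation}\}$ or a careful matroid-theoretic analysis of the columns of $G$.
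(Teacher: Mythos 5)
You are attempting to prove a statement that the paper itself poses as an \emph{open conjecture}: the paper only establishes Conjecture \ref{guess} in special cases ($n-k\le 2$ via Theorem \ref{g4} and Lemma \ref{gt}; $k\le 2$ via Corollary \ref{guess2} and Theorem \ref{k=2}; $n=6$, $k=3$ in characteristic $2$; and certain fields via Proposition \ref{alpha} together with Corollary \ref{c4dd}). Within that context, your argument for the stratum $\rank (GG^t)=k-1$ is correct and, as far as I can see, genuinely new relative to the paper: the identity $GP_{(i\ j)}G^t=M-(g_i-g_j)(g_i-g_j)^t$ combined with $\det(M-vv^t)=\det M-v^t\,\mathrm{adj}(M)\,v$ forces $w^tg_i$ to be independent of $i$ for $w$ spanning $\ker M$ (note $(w^t(g_i-g_j))^2=0$ does imply $w^t(g_i-g_j)=0$ over a field), and the deduction of both the column-sum and row-sum conditions from $w=(1,\ldots,1)^t\in\ker M$ is airtight. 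This stratification by the corank of $GG^t$ is transverse to the paper's stratification by $k$ and $n-k$, so the corank-one case would be a worthwhile addition; the closest analogue in the paper is Lemma \ref{eigen} plus the averaging of characteristic polynomials over $S_k$ (Corollaries \ref{c1dd} and \ref{c3dd}), which extracts only the sum of all entries of $AQA^t$ and needs $(k-1)!\neq 0$ in $F$, whereas your adjugate computation costs nothing in characteristic but buys only the corank-one case.

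The gap is the case $\rank(GG^t)\le k-2$, and everything after ``My plan here is'' is a sketch, not a proof. This is exactly where the difficulty of the conjecture sits. Concretely: when $\mathrm{adj}(M)=0$ the transposition test is vacuous, and the extreme case $M=GG^t=0$ (e.g.\ $G=(I_k\,|\,I_k)$ over $F_2$) shows the degeneracy can be total. There, for an involution $\sigma$ that is a product of $r$ disjoint transpositions, $GP_\sigma G^t=-\sum_{s=1}^r v_sv_s^t$ has rank at most $r$, so the hypothesis $\det(GP_\sigma G^t)=0$ is automatic for $r<k$ and carries no information; for $r\ge k$ the Cauchy--Binet expansion gives a constraint of the form $\sum_S\det(V_S)^2=0$, which over a non-formally-real field does not force individual terms to vanish, so no ``Pl\"ucker-type'' conclusion follows without a substantially new idea. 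Moreover, your rank-$r$ update formula applies only to involutions, while the hypothesis supplies vanishing for all $n!$ permutations, and it is not clear (and you do not argue) that involutions alone suffice even in principle. As it stands, the proposal is a correct partial result for $\mathrm{corank}(GG^t)=1$ plus an unverified research plan; it does not prove Conjecture \ref{guess}, and the hard cases it leaves open include precisely the self-orthogonal configurations that the paper's own partial results (e.g.\ Theorem \ref{k=2}) handle only by laborious low-dimensional case analysis.
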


Note that the ``in particular'' statement in the conjecture implies the whole conjecture, because of Lemma \ref{c1}.

\begin{rem} \label{l2} Suppose that $G$ is a generating matrix for a code $C$ and $e=(1,\ldots,1)_{1\times n}$. Then the sum of the
entries on any row of $G$ is $0$ \ifof $Ge^t=0$, that is, $e\in C^\perp$. Also if $G$ is in standard form, then the sum
of the entries on any column of $G$ is $1$, \ifof $e$ is in the row space of $G$, that is, $e\in C$. Thus in the
language of coding theory, Conjecture \ref{guess} reads as ``$C$ is not \good\ \ifof $e\in C\cap C^\perp$''.
\end{rem}

Before proving Conjecture \ref{guess} in some special cases, we mention some consequences of trueness of this
conjecture.

\begin{prop}\label{consequence}
Suppose that Conjecture \ref{guess} holds. Then
\begin{enumerate}
\item If $\chr F\nmid n$, then all length $n$ codes are \good;

\item If $q>2$, then for every $q$-ary code $C$ there exists a code $D$ such that $D$ is monomially equivalent to
    $C^\perp$ and $(C,D)$ is an LCP of codes.
\end{enumerate}
\end{prop}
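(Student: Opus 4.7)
The plan is first to translate Conjecture \ref{guess} via Remark \ref{l2} into the cleaner form: assuming the conjecture, a code $C$ is \good\ if and only if the all-ones vector $e=(1,\ldots,1)$ does not lie in $C\cap C^\perp$. (The rank condition of Theorem \ref{011} is automatic for a code presented by a basis.)

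For part (i), I would argue by contrapositive. If $C$ is not \good, then $e\in C\cap C^\perp$ by the reformulation, and in particular $e\cdot e=0$ in $F$. But $e\cdot e=n\cdot 1_F$, forcing $\chr F\mid n$. Contrapositively, if $\chr F\nmid n$ then every length-$n$ code is \good, hence admits a permutation-equivalent, and in particular a monomially equivalent, LCP-partner of $C^\perp$.

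For part (ii), I would split into two cases. If $C$ is \good, a permutation-equivalent $D$ for $C^\perp$ is also monomially equivalent and we are done. Otherwise $e\in C\cap C^\perp$. The idea is to monomially rescale $C$ by a diagonal matrix $N$ so that the rescaled code $C':=CN$ escapes the obstruction $e\in C'\cap(C')^\perp$; the conjecture will then make $C'$ \good, and pulling back a permutation-equivalent LCP-partner of $(C')^\perp$ through $N^{-1}$ will yield the required monomially equivalent partner of $C^\perp$. Concretely, I would take $N=\diag(d_1,\ldots,d_n)$ with $d_i=a$ in a single coordinate $i$ and $d_j=1$ elsewhere, where $a\in F_q\setminus\{0,1\}$, which exists because $q>2$. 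Using $(CN)^\perp=C^\perp N^{-1}$ together with the assumptions $e\in C$ and $e\in C^\perp$, a direct computation yields the equivalences $e\in CN \iff e_i\in C$ and $e\in(CN)^\perp \iff e_i\in C^\perp$, so it suffices to pick $i$ with $e_i\notin C$ or $e_i\notin C^\perp$.

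The main thing to verify is the existence of such an $i$, which is immediate: if $e_i\in C$ for every $i$ then $C=F^n$, and then $e\in C^\perp=0$ contradicts $e\neq 0$. The remainder---checking that the LCP property survives the translation by $N^{-1}$, and that a product of diagonal and permutation matrices is again monomial---is routine matrix bookkeeping. I expect the only conceptual step is the reformulation through Remark \ref{l2}; after that, (i) collapses to computing $e\cdot e$, and (ii) reduces to engineering $N$ to kill the single-coordinate obstruction.
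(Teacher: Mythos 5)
Your proposal is correct and follows essentially the same route as the paper: part (i) amounts to the computation $0=e\cdot e=n\cdot 1_F$ forced by $e\in C\cap C^{\perp}$ (the paper phrases this as summing all entries of a standard-form generator matrix once by rows and once by columns), and part (ii) uses exactly the paper's device of rescaling a single coordinate by some $a\in F\setminus\{0,1\}$ and pulling the resulting LCP partner of the rescaled code back through the monomial matrix. The one place you are more careful than the paper is the choice of which coordinate to rescale: the paper always scales the first coordinate of a standard-form generator and asserts that the column sums no longer all equal $1$, whereas you select an index $i$ with $e_i\notin C$ via the (correct, one-line) observation that otherwise $C=F^n$ and $e\in C^{\perp}=0$; either choice works, but your formulation of the obstruction as $e\in CN\cap (CN)^{\perp}$ avoids having to re-standardize the rescaled matrix before invoking Conjecture \ref{guess}.
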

\begin{proof}
(i): Suppose that $C$ is a non-\good\ code with length $n$. Using Lemma \ref{c2}, we can assume that $C$ has a
generating matrix in standard form such as $G$. Now according to the conjecture the sum of all entries of $G$ is
$(k)(0)=(n)(1_F)$. Thus $n$ must be zero in $F$, that is, $\chr F|n$, a contradiction.

(ii): If $C$ is \good, then clearly the result holds. Thus suppose that $C$ is not \good. Using a monomial version of
Lemmas \ref{3e} and \ref{c2} (where we just replace permutation matrices with monomial matrices), we can assume that
$C$ has a generating matrix $G$ in standard form. Suppose that $0,1\neq a\in F$ and let $A$ be the diagonal matrix
$\diag(a,1,\ldots, 1)$ and $G'=GA$. Then the sum of entries on the first column of $G'$ is different from the sum of
entries on any other column of $G'$. Hence according to the conjecture, $G'$ is
\good. Therefore, there exists a permutation matrix $P$ such that $G'PG'^t=GAPAG^t$ is invertible. Let $M=AP^tA$ which
is a monomial matrix and $D$ be the code with the parity check matrix $GM$. Then $D$ is monomially equivalent to $C$
and it follows from Theorem \ref{LCP mat} that $(C,D)$ is an LCP of codes.
\end{proof}

In the sequel, we present some results supporting correctness of the conjecture and prove that it holds in several
special cases. First we consider the case that $n=k+1$. Hereafter by $S_k$ we mean the symmetric group on $\{1,\ldots,
k\}$  and for any permutation, $\sigma \in S_k$, its corresponding permutation matrix is denoted by $P_{\sigma }$.
Indeed, if $\delta$ is the Kronecker delta,
 $$(P_{\sigma})_{ij}= \delta_{\sigma(j),i}= \left\{ \begin{array}{ccc}
       1&\ \ \ & \sigma(j)=i\\
       0&\ \ \ & \sigma(j)\neq i \\
      \end{array}\right. .$$
\begin{lem}\label{l3}
Assume that  $\pi\in S_{k+1}$ with $\pi(k+1)\neq k+1$, $P=P_\pi$ and 
$$G=\left(
\begin{array}{c:c}
I_k &
         \begin{array}{c}
           a_1 \\
           a_2 \\
           \vdots \\
           a_k \\
         \end{array}
 \\
 \end{array}
 \right)_{k\times (k+1)}.$$
Then $$\det(GPG^t)=\pm \left(a_j +a_i -\sum^k _{r=1, r\neq j} a_{\pi(r)}a_r\right),$$ where $i=\pi(k+1)$ and
$j=\pi^{-1}(k+1)$.
\end{lem}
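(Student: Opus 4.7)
My plan is to write $M:=GPG^t$ as an explicit sum of three simple matrices and then perform column operations to reduce it to a matrix whose determinant can be read off directly. Writing $g_r$ for the $r$-th column of $G$ (so $g_r=e_r$ for $r\le k$ and $g_{k+1}=a$) and using the identity $GPG^t=\sum_{s=1}^{k+1} g_{\pi(s)} g_s^t$, I would split off the indices $s=j$ and $s=k+1$ to obtain
$$ M \;=\; \tilde P \;+\; a\, e_j^t \;+\; e_i\, a^t, $$
where $\tilde P = \sum_{s\in\{1,\ldots,k\}\setminus\{j\}} e_{\pi(s)} e_s^t$ has zero $i$-th row and zero $j$-th column (since $\pi(j)=k+1$ and $\pi^{-1}(i)=k+1$) and otherwise records the bijection $\pi\colon\{1,\ldots,k\}\setminus\{j\}\to\{1,\ldots,k\}\setminus\{i\}$. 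Reading off columns of $M$, the $s$-th column is $e_{\pi(s)}+a_s e_i$ for $s\neq j$, while the $j$-th column is $a+a_j e_i$.

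The key step uses the identity $a = a_i e_i + \sum_{s\neq j} a_{\pi(s)} e_{\pi(s)}$, which follows because $\pi$ restricts to a bijection $\{1,\ldots,k\}\setminus\{j\}\to\{1,\ldots,k\}\setminus\{i\}$. I would then subtract $a_{\pi(s)}$ times the $s$-th column of $M$ from the $j$-th column for each $s\neq j$; this preserves $\det M$, and a direct substitution shows that the $e_{\pi(s)}$ pieces cancel exactly, so the new $j$-th column becomes
$$ \Big(a_i + a_j - \sum_{s\neq j} a_{\pi(s)} a_s\Big) e_i \;=\; \lambda\, e_i. $$

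Finally, cofactor expansion along this column gives $\det M = (-1)^{i+j}\lambda\cdot\det M''$, where $M''$ is obtained by deleting row $i$ and column $j$. In $M''$, the $a_s e_i$ summands are killed by deleting row $i$, leaving the standard basis columns $e_{\pi(s)}$ (all with $\pi(s)\neq i$, since $\pi^{-1}(i)=k+1$ lies outside $\{1,\ldots,k\}\setminus\{j\}$). Hence $M''$ is a genuine permutation matrix and $\det M''=\pm1$, giving $\det M=\pm\lambda$. The only delicate point is the sign bookkeeping that combines $(-1)^{i+j}$ with the sign of the permutation underlying $M''$, but since the lemma only asserts the formula up to sign, this is not a genuine obstacle.
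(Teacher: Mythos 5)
Your proposal is correct and follows essentially the same route as the paper: both compute the entries of $GPG^t$ (you via the outer-product decomposition $\sum_s g_{\pi(s)}g_s^t$, the paper by direct inspection), perform the identical column operation of subtracting $a_{\pi(s)}$ times column $s$ from column $j$, and conclude by observing that what remains is a permutation matrix apart from the single entry $\lambda$ in position $(i,j)$. The outer-product bookkeeping is a slightly cleaner way to justify the entry formulas, but the argument is the same.
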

\begin{proof}
Suppose that $B=GPG^t$. It can be seen that $B_{ij}=a_i+a_j$, $B_{il}=a_l$ for each $l\neq j$, $B_{rj}=a_r$ for each
$r\neq i$ and $B_{rl}=\delta_{r,\pi(l)}$ for each $l\neq j$ and $r\neq i$. Apply the following elementary operations on
$B$: for all $r\neq j$, add $-a_{\pi(r)}$ times of the $r$-th column to the $j$-th column. We get a matrix in which the
$(i,j)$-entry is $d_{\pi}=a_j +a_i -\sum^k _{r=1 , r\neq j} a_{\pi(r)}a_r$ and other entries on the $j$-th column are
$0$. On each row of this matrix other than row $i$ and each column of this matrix other than column $j$, there exists
exactly one $1$ and other entries are $0$. From this, the result follows.
\end{proof}

\begin{thm}\label{g4}
  Let $n=k+1$ and 
 $$G=\left(
\begin{array}{c:c}
I_k &
         \begin{array}{c}
           a_1 \\
           a_2 \\
           \vdots \\
           a_k \\
         \end{array}
   \end{array}
 \right)_{k\times (k+1)}.$$
Then $G$ is not \good\ \ifof $\chr F\mid n$ and $a_i=-1$, for all $i$. In other words, Conjecture \ref{guess} is true
in this case.
\end{thm}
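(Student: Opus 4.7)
The plan is to exploit Lemma \ref{l3} by evaluating $\det(GPG^t)$ for a few carefully chosen permutation matrices $P$, together with a direct computation of the identity case, which Lemma \ref{l3} does not cover.

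For the nontrivial direction I would assume $G$ is not \good, so $\det(GPG^t) = 0$ for every permutation $P$. Writing $a = (a_1, \ldots, a_k)^t$, observe first that $GG^t = I_k + aa^t$, whence the matrix determinant lemma gives
$$\det(GG^t) = 1 + a^t a = 1 + \sum_{i=1}^k a_i^2,$$
and this quantity must therefore vanish. Next I would apply Lemma \ref{l3} to each transposition $\pi_i = (i,\, k+1)$ for $i = 1, \ldots, k$; since $\pi_i(k+1) = i = \pi_i^{-1}(k+1)$ and $\pi_i$ fixes every $r \in \{1, \ldots, k\} \setminus \{i\}$, the lemma specializes to
$$\det(GP_{\pi_i} G^t) = \pm\!\left(2a_i - \sum_{r \neq i} a_r^2\right) = 0.$$
Combined with $1 + \sum_r a_r^2 = 0$ this yields $a_i^2 + 2a_i + 1 = (a_i + 1)^2 = 0$, and since $F$ is a field (hence an integral domain) this forces $a_i = -1$ for every $i$. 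Plugging back into $1 + \sum a_i^2 = 0$ gives $1 + k = 0$ in $F$, that is, $\chr F \mid k + 1 = n$.

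For the converse, assume $\chr F \mid n$ and each $a_i = -1$. Each row of $G$ then has entries summing to $1 + (-1) = 0$, and the sum of all rows of $G$ equals $(1, \ldots, 1, -k) = (1, \ldots, 1)$, using $-k = 1$ in $F$; hence $(1, \ldots, 1)$ lies in the row space of $G$, and Theorem \ref{011} immediately gives that $G$ is not \good. The only delicate point is that the identity permutation falls outside the hypothesis of Lemma \ref{l3} and must be handled separately via the matrix determinant lemma; once that is done, the real content is just that the identity-case equation and the transposition-case equations conspire to produce a perfect square, whose vanishing over a field forces $a_i = -1$.
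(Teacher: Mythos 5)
Your proof is correct, and it is a genuinely leaner route than the paper's, although both arguments revolve around evaluating $\det(GPG^t)$ at well-chosen permutations via Lemma \ref{l3}. The paper first has to prove that all the $a_i$ coincide: for each pair $r_1\neq r_2$ it combines the quantities $e_\pi$ for the two transpositions $(r_1\ k+1)$, $(r_2\ k+1)$ and the $3$-cycle $(r_1\ r_2\ k+1)$ to obtain $(a_{r_1}-a_{r_2})^2=0$; only after setting $a=a_i$ does it compute $\det(GG^t)=1+ka^2$ by explicit row reduction, and it then substitutes back into a transposition equation to extract $a=-1$ and finally $k+1=0$. You dispense with the $3$-cycles entirely: the identity $\det(I_k+aa^t)=1+a^ta$ (a polynomial identity valid over any field, so no eigenvalue or positivity argument is hidden there) gives the identity-permutation equation $1+\sum_r a_r^2=0$ with no equal-entries assumption, and adding it to the single transposition equation $2a_i=\sum_{r\neq i}a_r^2$ immediately produces $(a_i+1)^2=0$ for each $i$ separately. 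This yields a shorter proof that also absorbs the $k=1$ case uniformly (the paper dismisses it as ``clear'') and is characteristic-free throughout. The converse direction via Theorem \ref{011} is exactly the paper's one-line appeal to that theorem, which you correctly flesh out by noting that the sum of the rows of $G$ is $(1,\ldots,1,-k)=(1,\ldots,1)$ once $\chr F\mid k+1$. The only cosmetic suggestion is to justify or cite the matrix determinant lemma explicitly, since the paper never invokes it; alternatively it can be recovered from the same row reduction the paper performs for $I_k+a^2U$.
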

\begin{proof}
One side follows from Theorem \ref{011}. Assume that $G$ is not \good. If $k=1$, the statement is clear. Suppose $k\geq
2$. Then for any permutation $\pi$, where $\pi(k+1)= i\neq k+1$ and $\pi(j)=k+1$, by Lemma \ref{l3}, we have
\begin{align}\label{p}
 e_\pi=\pm \det(GP_{\pi}G^t)= a_j +a_i -\sum^k _{r=1, r\neq j}a_{\pi(r)}a_r =0.
\end{align}
For each $r_1\neq r_2\leq k$ consider the permutations $\pi _1=(r_1\ k+1)$, $\pi_2=(r_2\ k+1)$ and $\pi_3=(r_1\ r_2\
k+1)$. Then we get
$$2e_{\pi _3}-e_{\pi_1}-e_{\pi _2}=a_{r_1}^2+a_{r_2}^2-2a_{r_1}a_{r_2}=0.$$
Hence $(a_{r_1}-a_{r_2})^2=0$ and $a_{r_1}=a_{r_2}$, for all $r_1\neq r_2$. Let $a=a_i$ for some or equivalently all
$i$. Then we have $GG^t=I_k+a^2U$, where $U$ is a $k$-square matrix with all entries 1. If first, we consecutively
subtract row $i$ of $GG^t$ from row $i-1$ for $i=2,3,\ldots,k$ and then subtract $ia^2$ times the $i$-th row from the
last row for $i=1,\ldots,k-1$, we obtain the following matrix:
$$B=\begin{pmatrix}
1 & -1 & 0 & 0 & \ldots & 0\\
0 & 1 & -1 & 0 &\ldots &0 \\
\vdots & \vdots & \ddots & \ddots & \ddots & \vdots \\
0 & 0&  \ldots & 0 & 1 & -1 \\
0 & 0 & \ldots & 0 & 0 & 1+ka^2
\end{pmatrix}.$$
Hence $0=\det(GG^t)=\det(B)=1+ka^2$, that is, $ka^2=-1$. Now if in \eqref{p} we replace all $a_l$'s with $a$ and then
replace $ka^2$ with $-1$, we get $2a+1+a^2=0$, which implies $a=-1$. Now $1+ka^2=0$ shows that $k+1=0$, that is, $\chr
F| n$, as required.
\end{proof}

Now we can complete Example \ref{ex1}.
\begin{ex}
By Theorem \ref{g4}, all $2\times 3$ standard matrices are \good, except in the case that $\chr F=3$ and the matrix is
$\begin{psmat} 1& 0& 2 \\
0 & 1& 2 \end{psmat}$
\end{ex}

The following states another case in which the conjecture holds.
\begin{cor}\label{guess2}
For any $1\times n$ standard matrix Conjecture \ref{guess} holds.
\end{cor}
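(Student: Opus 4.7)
The plan is to reduce the \good{} condition for a $1 \times n$ standard matrix $G = (1 \mid a_1,\ldots,a_{n-1})$ to a scalar identity over all column permutations, and then isolate the $a_i$'s by comparing the identity permutation with simple transpositions. Writing the single row of $G$ as $(c_1,\ldots,c_n)$ with $c_1 = 1$ and $c_{j+1} = a_j$, the first observation is that for a permutation $\pi \in S_n$ the $1 \times 1$ matrix $GP_\pi G^t$ is just the scalar
\[
GP_\pi G^t = \sum_{j=1}^n c_j\, c_{\pi(j)},
\]
so by Proposition \ref{GC}, $G$ fails to be \good{} iff this sum vanishes for every $\pi \in S_n$.

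Assuming $G$ is not \good, I would evaluate this identity at two carefully chosen permutations. The identity permutation gives $\sum_j c_j^2 = 0$, while a transposition $\pi = (r\;s)$ with $r \neq s$ gives $2 c_r c_s + \sum_{j \neq r,s} c_j^2 = 0$. Subtracting the two relations yields $(c_r - c_s)^2 = 0$, so $c_r = c_s$ for every pair $r,s$. Since $c_1 = 1$ by the standard-form hypothesis, we obtain $a_1 = \cdots = a_{n-1} = 1$, which is exactly the column-sum half of Conjecture \ref{guess}. Plugging back into $\sum_j c_j^2 = 0$ then forces $n\cdot 1_F = 0$, i.e.\ $\chr F \mid n$, which makes the row sum $1 + a_1 + \cdots + a_{n-1} = n$ vanish in $F$, the row-sum half of the conjecture.

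The converse implication is already provided by Theorem \ref{011}, so nothing further is needed. There is no real obstacle in this argument: once the scalar formula for $GP_\pi G^t$ is in hand, everything rests on the single difference-of-squares identity $(c_r - c_s)^2 = 0$, and the small row dimension $k = 1$ lets us avoid the more delicate case analysis used in Lemma \ref{l3} and Theorem \ref{g4} for the $n = k+1$ setting.
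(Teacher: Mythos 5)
Your argument is correct, but it takes a different route from the paper. The paper's own proof is a two-line reduction: it applies Lemma \ref{gt} to pass from $G=(1\mid a_1,\ldots,a_{n-1})$ to the $(n-1)\times n$ matrix $G'=(I_{n-1}\mid -A^t)$, which is then covered by the already-proved case $n=k+1$ (Theorem \ref{g4}), yielding $\chr F\mid n$ and $a_i=1$ immediately. You instead work directly with $k=1$, exploiting the fact that $GP_\pi G^t$ is the scalar $\sum_j c_jc_{\pi(j)}$, and extract $(c_r-c_s)^2=0$ by comparing the identity permutation with the transposition $(r\ s)$; all your computations check out, including the final step that $\sum_j c_j^2=n\cdot 1_F=0$ gives both halves of the conjecture's conclusion. (The degenerate case $n=1$, where no transpositions exist, is vacuous since $\sum_j c_j^2=1\neq 0$ already witnesses that $G$ is \good.) Your approach is self-contained and avoids invoking the heavier machinery of Lemma \ref{l3} and Theorem \ref{g4}; in fact your difference-of-squares step is precisely the $k=1$ shadow of the identity $2e_{\pi_3}-e_{\pi_1}-e_{\pi_2}=(a_{r_1}-a_{r_2})^2$ used inside the proof of Theorem \ref{g4}, so the two proofs are close in spirit even though the paper's is formally a corollary of duality while yours is a fresh direct argument.
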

\begin{proof}
Suppose that $G=( 1 | a_1 , \ldots , a_{n-1})_{1\times n}$
is not \good. By Lemma \ref{gt}, the matrix $$G'=\left(
  \begin{array}{c:c}
    I_{n-1} & \begin{array}{c}
    -a_1 \\
     \vdots \\
     -a_{n-1}
     \end{array}
      \\
      \end{array}
      \right)$$
is not an \good\ matrix. Hence by Theorem \ref{g4}, $\chr F\mid n$ and $-a_i=-1$, that is, $a_i=1$ for all $i$ and the
claim follows.
\end{proof}

To find other conditions under which the conjecture holds, we use the following observation.
\begin{lem}\label{eigen}
Suppose that $G=(I_k|A)$ is a non-\good\ matrix, $Q$ is a $(n-k)$-square permutation matrix and $M=AQA^t$. Then $-1$ is
an eigenvalue of $M$ and all of its row (or equivalently, column) permutations.
\end{lem}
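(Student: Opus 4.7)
The plan is to reduce the claim to an identity of the form $\det(I_k + RM) = 0$ for every $k$-square permutation matrix $R$, and then exploit the fact that multiplying $G$ on the left by the invertible matrix $R$ preserves non-LCED-ness by Lemma \ref{3e}(ii). The starting observation is the remark preceding the lemma: for the $n$-square permutation matrix $P = \left(\begin{smallmatrix} I_k & 0 \\ 0 & Q \end{smallmatrix}\right)$, one has $GPG^t = I_k + M$. Since $G$ is non-\good, $\det(GPG^t) = 0$, so $-1$ is an eigenvalue of $M$. This settles the base case.

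For an arbitrary $k$-square permutation matrix $R$, I would set $G' = RG = (R \mid RA)$ and apply Lemma \ref{3e}(ii) to conclude that $G'$ is still non-\good. The trick is to choose the right $n$-square permutation matrix; I would try $P' = \left(\begin{smallmatrix} R^{-1} & 0 \\ 0 & Q \end{smallmatrix}\right)$, which is indeed a permutation matrix. A short calculation gives $G'P' = (I_k \mid RAQ)$ and therefore
\[
G'P'G'^t = R^t + RAQA^t R^t = R^t + RMR^t = (I_k + RM)\,R^t.
\]
Since $R^t$ is invertible, the vanishing of $\det(G'P'G'^t)$ forces $\det(I_k + RM) = 0$, so $-1$ is an eigenvalue of $RM$ as required.

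For the equivalence with column permutations, I would just invoke the standard fact that for any square matrices $X, Y$ of the same size one has $\det(I - XY) = \det(I - YX)$; applying this with $X = R$ and $Y = M$ (up to a sign change) shows that $-1$ is an eigenvalue of $RM$ if and only if it is an eigenvalue of $MR$.

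There is no real obstacle here beyond spotting the correct choice of $P'$; once one sees that putting $R^{-1}$ in the top-left block of $P'$ turns $G'P'$ back into the standard-form shape $(I_k \mid \cdot)$, the identity $G'P'G'^t = (I_k + RM)R^t$ drops out, and the rest is a formality.
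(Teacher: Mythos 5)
Your proof is correct and follows essentially the same route as the paper: both arguments place a free $k$-square permutation block in the top-left corner of the $n$-square permutation matrix and factor the resulting singular matrix as (permutation)$\cdot(I_k+RM)$, the only difference being that the paper plugs $R$ directly into $GPG^t=R+M$ while you first pass to $G'=RG$ via Lemma \ref{3e}, an extra but harmless step. The column-permutation case is handled the same way in both (factoring on the other side, equivalently $\det(I+XY)=\det(I+YX)$).
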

\begin{proof}
Let $P'$ be an arbitrary $k$-square permutation matrix and
 $$P=\left(\begin{array}{c:c}
 P' & 0 \\
 \hdashline
 0 & Q
 \end{array}\right).$$
Then $0=\det(GPG^t)=\det(P'+APA^t)$. Multiplying by $P'^{-1}$, we get $\det(I+P'^{-1}M)=0$ (or $\det(I+MP'^{-1})=0$).
Noting that $P'$ and hence $P'^{-1}$ are arbitrary, the proof is concluded.
\end{proof}

To utilize the above statement, we first study matrices $M$ with the property that $-1$ an eigenvalue of $PM$ for every
permutation matrix $P$.

\begin{defn}
We say that a field $F$ satisfies condition $\Pi _k$, when for any $k\times k$ symmetric matrix $M$ over $F$, if $PM$
has $-1$ as an eigenvalue for every permutation matrix $P$, then the sum of all entries of $M$ is $-k$.
\end{defn}

Below we will show that many fields satisfy condition $\Pi_k$ (see Corollary \ref{c4dd}), but first we state the
following proposition which shows why we are interested in fields satisfying condition $\Pi_k$.

\begin{prop}\label{alpha}
Suppose that $k\geq 2$ and $G=( I_k|  A)$
is a non-\good\ matrix.
\begin{enumerate}
  \item  If condition $\Pi _k$ holds for $F$, then there is a constant $\beta\in F$ satisfying  $(n-k)\beta ^2 =-k$
      such that the sum of all entries on any column of $A$ is $\beta$.
  \item If $F$ satisfies condition $\Pi _{n-k}$, then there is a constant $\alpha\in F$ satisfying  $k\alpha ^2
      =-(n-k)$ such that the sum of all entries on any row of $A$ is $\alpha$.
\end{enumerate}
\end{prop}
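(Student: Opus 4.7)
The plan is to combine Lemma \ref{eigen} with hypothesis $\Pi_k$ (resp.\ $\Pi_{n-k}$), applied to a carefully chosen family of \emph{symmetric} matrices of the form $AQA^t$. Part (ii) will then be obtained from part (i) by dualizing via Lemma \ref{gt}.

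For part (i), write $e=(1,\ldots,1)^t\in F^k$ and set $u=A^t e\in F^{n-k}$, so that $u_j$ is the sum of the entries in the $j$-th column of $A$. First I would take $Q=I_{n-k}$: the matrix $M=AA^t$ is symmetric, and Lemma \ref{eigen} supplies exactly the eigenvalue hypothesis needed to invoke $\Pi_k$. Its conclusion is that the sum of all entries of $AA^t$ equals $-k$, which, rewriting as $e^t A A^t e=u^t u$, gives $\sum_{j=1}^{n-k} u_j^2=-k$. Next, for each pair $i\neq j$ in $\{1,\ldots,n-k\}$, I would take $Q$ to be the permutation matrix of the transposition $(i\ j)$; such a $Q$ is involutive and hence symmetric, so $M=AQA^t$ is again symmetric and $\Pi_k$ applies to it as well, forcing its entry-sum $u^t Q u=\sum_{l\neq i,j}u_l^2+2u_i u_j$ also to equal $-k$. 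Subtracting from the previous identity yields $(u_i-u_j)^2=0$, whence $u_i=u_j$. Letting $\beta$ denote this common value of the $u_j$'s, the sum-of-squares identity becomes $(n-k)\beta^2=-k$, as claimed.

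For part (ii), Lemma \ref{gt} ensures that $G'=(I_{n-k}|-A^t)$ is also non-\good. Applying part (i) to $G'$ under the hypothesis $\Pi_{n-k}$ produces a $\beta\in F$ with $k\beta^2=-(n-k)$ such that every column of $-A^t$ sums to $\beta$, i.e., every row of $A$ sums to $\alpha:=-\beta$, which satisfies $k\alpha^2=-(n-k)$.

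The only mildly delicate point will be ensuring that $M=AQA^t$ is symmetric so that $\Pi_k$ is actually applicable; restricting $Q$ to involutive permutation matrices (the identity and transpositions) is exactly what makes the argument go through while still yielding both the global sum-of-squares identity and the pairwise equalities $u_i=u_j$.
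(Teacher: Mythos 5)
Your proposal is correct and follows essentially the same route as the paper: apply Lemma \ref{eigen} together with $\Pi_k$ to the symmetric matrices $AA^t$ and $A P_{(i\ j)} A^t$ to get the sum-of-squares identity and the pairwise equalities of column sums, then dualize via Lemma \ref{gt} for part (ii). Your formulation of the entry sums as $u^tQu$ with $u=A^te$ is a slightly cleaner packaging of the same computation the paper carries out coordinatewise.
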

\begin{proof}
(i): Let $A=(a_{ij})$ and $M_{k\times k}$ be the symmetric matrix $AA^t$. By Lemma \ref{eigen}, for any permutation
matrix $P$,
      $\det (I+PM)=0$. Since condition $\Pi _k$ holds, $\sum_{i,j=1}^k
      M_{ij}=-k$. We know that $M_{ij}=\sum_{t=1}^m a_{it} a_{jt}$, where $m=n-k$. Now,
\begin{equation}\label{alpha1}
-k=\sum_{i,j=1}^k M_{ij}=\sum_{i,j=1}^k\sum_{t=1}^m a_{it} a_{jt}=\sum_{t=1}^m(\sum_{j=1}^ka_{jt})(\sum_{i=1}^ka_{it} )=\sum_{t=1}^m \beta_t^2,
\end{equation}
where $\beta _t$ is the sum of all entries on column $t$ of $A$.

Let $Q=P_{(1\ 2)}$ and $D=AQA^t$. Again by \ref{eigen}, $PD$ has an eigenvalue $-1$ for all permutation matrices $P$.
Since $Q=Q^{-1}=Q^t$, it follows that $D$ is a symmetric matrix and hence $\sum_{i,j=1}^k D_{ij}=-k$. Now
\begin{equation*}
 D_{ij}= \sum_{t=1}^m a_{it}(QA^t)_{tj}=(\sum_{t=3}^m a_{it}a_{jt})+(a_{i1}a_{j2}+a_{i2}a_{j1})
\end{equation*}
and
\begin{align}\label{alpha2}
  -k & = \sum_{i,j=1}^k D_{ij}=(\sum_{i,j=1}^k\sum_{t=3}^m a_{it} a_{jt})+(\sum_{i,j=1}^k  a_{i1}a_{j2} )+ (\sum_{i,j=1}^k a_{i2}a_{j1} ) \notag\\
   & = ( \sum_{t=3}^m \beta _t^2)+2\beta _1 \beta _2.
\end{align}
Thus by \eqref{alpha1} and \eqref{alpha2}, $\beta_1 ^2 +\beta_2 ^2 = 2 \beta_1 \beta_2$. So $\beta_1 = \beta_2$.
Similarly, $\beta_1 =\beta_2= \cdots =\beta_m =\beta$. Also, $-k=m\beta^2 $, by \eqref{alpha1}.

(ii): By Lemma \ref{gt}, $G'=(I_{(n-k)} | -A^t)_{(n-k)\times n}$ is not an \good\ matrix. Now, the result is obtained by
applying part (i) on $G'$.
\end{proof}

As an example of utilizing this proposition, we have the following.
\begin{cor}\label{char2 odd k,n-k}
Suppose that $k$ is odd and $n$ is even. Also, let $F$ be a field  with characteristic $2$ which satisfies $\Pi _k$ and
$\Pi_{n-k}$. If $( I_k |  A)_{k\times n}$
is not an \good\ matrix, then the sum of all entries on each row or each column of $A$ is $1$. That is, Conjecture
\ref{guess} is true in this case.
\end{cor}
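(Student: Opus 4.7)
The approach is to apply Proposition \ref{alpha} in both its column and row forms, and then let the arithmetic of characteristic $2$ pin down the two resulting constants.

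First I would invoke Proposition \ref{alpha}(i): since $F$ satisfies $\Pi_k$ and $G=(I_k\mid A)$ is non-\good, there exists $\beta\in F$ with $(n-k)\beta^2=-k$ such that every column of $A$ sums to $\beta$. Applying part (ii) with $\Pi_{n-k}$ yields $\alpha\in F$ with $k\alpha^2=-(n-k)$ such that every row of $A$ sums to $\alpha$.

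The only remaining step is to evaluate $\alpha$ and $\beta$, and this is where the parity hypotheses enter. In characteristic $2$ every integer reduces to its parity in the prime subfield $\mathbb{F}_2$, and signs disappear. Because $k$ is odd and $n$ is even (so $n-k$ is also odd), both $k$ and $n-k$ equal $1$ in $F$, while $-k=k$ and $-(n-k)=n-k$. The defining equations therefore collapse to $\alpha^2=1$ and $\beta^2=1$. Since $x^2-1=(x-1)^2$ in characteristic $2$, this forces $\alpha=\beta=1$.

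Thus every row sum and every column sum of $A$ equals $1$, which, for a standard-form matrix $G=(I_k\mid A)$ in characteristic $2$, is exactly the conclusion ``row sums of $G$ are $0$ and column sums of $G$ are $1$'' predicted by Conjecture \ref{guess} (by Remark \ref{l2}). There is no genuine obstacle here: all the real work has already been done inside Proposition \ref{alpha}, and the corollary is essentially a parity bookkeeping argument that makes the two squared quantities equal to $1$ simultaneously.
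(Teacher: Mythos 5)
Your proposal is correct and follows exactly the paper's own route: invoke Proposition \ref{alpha} in both directions to get $(n-k)\beta^2=-k$ and $k\alpha^2=-(n-k)$, then use that $k$ and $n-k$ are both odd so each equals $1$ in $F$ and signs vanish, giving $\alpha^2=\beta^2=1$ and hence $\alpha=\beta=1$ since squaring is injective in characteristic $2$. Nothing to add.
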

\begin{proof}
Note that in $F$ we have $k=n-k=1$. Therefore, it follows from Proposition \ref{alpha}, that $\alpha ^2 =1$ and hence
$\alpha =1$. Similarly, $\beta =1$.
\end{proof}

Note that when $\chr F=2$, $k$ is odd and $M_{k\times k}$ is a symmetric matrix, showing that the sum of all  entries
of $M$ is $-k$ is equivalent to showing that $\tr(M)=1$.
\begin{lem}\label{33}
 Let $M=\begin{psmat}
            a & x & y \\
            x & b & z \\
           y & z & c \\
          \end{psmat}$
be a symmetric matrix over a field $F$ with $\chr F=2$. If $-1$ is an eigenvalue of $PM$ for every permutation matrix
$P$, then $tr(M)=1$.
\end{lem}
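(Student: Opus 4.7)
The plan is to assume $a+b+c\neq 1$ and derive a contradiction from the six conditions $\det(I+P_\sigma M)=0$, $\sigma\in S_3$ (since $-1=1$ in characteristic $2$). Writing $A=1+a$, $B=1+b$, $C=1+c$ and $S=A+B+C$, the assumption is equivalent to $S\neq 0$ (in characteristic $2$, $\tr(M)=S+1$), and this is what I aim to rule out.

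First I would expand $f_I:=\det(I+M)=0$ together with the three transposition determinants $f_{(ij)}:=\det(I+P_{(ij)}M)=0$. Taking the three pairwise sums $f_I+f_{(ij)}$ collapses, just as in the proof of Proposition~\ref{alpha}, to the quadratic identities
\begin{equation*}
A(B+C)=(x+y)^2,\qquad B(A+C)=(x+z)^2,\qquad C(A+B)=(y+z)^2.
\end{equation*}

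The crux is the next step. From any two of these I can solve for two of $x^2,y^2,z^2$ in terms of the third, and substitute into $f_I=ABC+Az^2+By^2+Cx^2=0$. After collecting in characteristic $2$ (using $A+B+C=S$ and $B+C=A+S$), all terms telescope and the equation reduces, for each choice of pair, to one of $S(x^2+AB)=0$, $S(y^2+AC)=0$, $S(z^2+BC)=0$. Since $S\neq 0$, this forces $x^2=AB$, $y^2=AC$, $z^2=BC$. Because the Frobenius map is injective in characteristic $2$, from $(xy)^2=AB\cdot AC=(Az)^2$ we obtain $xy=Az$, and likewise $yz=Cx$ and $xz=By$. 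Consequently $xy+yz+xz=Az+By+Cx$ and $(x+y+z)^2=AB+AC+BC$.

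Finally I would expand $f_{(123)}:=\det(I+P_{(123)}M)=0$; using $f_I=0$ to replace the summand $Az^2+By^2+Cx^2$ by $ABC$ (which then cancels an explicit $ABC$ from the expansion), the equation simplifies to
\begin{equation*}
(x+y+z)^2+(xy+yz+xz)+(Az+By+Cx)+S+(AB+AC+BC)=0.
\end{equation*}
Substituting the three derived identities, every summand other than $S$ appears twice and so cancels in characteristic $2$, leaving $S=0$, the required contradiction. Hence $\tr(M)=a+b+c=1$. The main computational obstacle is the telescoping that produces the clean factored form $S(z^2+BC)=0$; once that is established, the rest falls into place mechanically.
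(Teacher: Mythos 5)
Your proof is correct --- I verified the computations: $f_I+f_{(1\,2)}$ does collapse to $C(A+B)=(y+z)^2$ and its two symmetric analogues; substituting $y^2=x^2+AB+AC$ and $z^2=x^2+AB+BC$ into $f_I=ABC+Az^2+By^2+Cx^2$ does telescope to $S(x^2+AB)$, since the three copies of $ABC$ reduce to one and $A^2B+AB^2=AB(S+C)$ absorbs it; and the rewritten $3$-cycle equation loses every term except $S$ after your substitutions. The route is genuinely different from the paper's in its middle portion. The paper never uses $\det(I+M)=0$; instead it works with $d=\det M$, writes the conditions for the three transpositions in the form $1+d+c=ab+x^2$ (etc.), and takes the combination $c\cdot(\text{eq.\ for }(1\,2))+b\cdot(\text{eq.\ for }(1\,3))+a\cdot(\text{eq.\ for }(2\,3))$, whose right-hand side is exactly $d$; this yields $(1+d)t+t^2+d=0$, i.e.\ $(t+1)(t+d)=0$, and the residual case $t=d$ produces the same identities $(1+a)(1+b)=x^2$, $(1+a)(1+c)=y^2$, $(1+b)(1+c)=z^2$ and is killed by squaring the $3$-cycle equation to force $d^2=1$. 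So both arguments pass through $x^2=AB$, $y^2=AC$, $z^2=BC$ and both finish on the $3$-cycle condition, but you reach those identities by adjoining the identity-permutation condition and assuming $S\neq 0$, whereas the paper reaches them via the determinant and a case split. Your version buys a cleaner endgame: extracting square roots with the Frobenius ($xy=Az$, $yz=Cx$, $xz=By$) lets you cancel the $3$-cycle equation term by term instead of squaring it, and $d$ never appears. The paper's version buys the structural identity $(t+1)(t+d)=0$, which pins $t$ to one of two values before any further work.
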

\begin{proof}
Set $d=\det (M)$ and $t=tr(M)$. For any $3\times 3$ matrix $B=(b_{ij})$, we have $\det(I+B)=0$ \ifof
\begin{equation}\label{B}
1+\det (B)+\tr(B)=(b_{11}b_{22}+b_{11}b_{33}+b_{22}b_{33})+(b_{12}b_{21}+b_{13}b_{31}+b_{23}b_{32}).
\end{equation}
Writing \eqref{B} for $B=P_\pi M$, where $\pi$ is either $ (1\ 2)$, $(1\ 3)$, $(2\ 3)$ or $(1\ 2\ 3)$, we deduce that
\begin{eqnarray}
  1+d+c &=& ab+x^2,  \label{1} \\
  1+d+b &=& ac+y^2,  \label{2}  \\
  1+d+a &=& bc+z^2,  \label{3} \\
 1+d+(x+y+z) &=& az+cx+by+xy+xz+yz. \label{4}
\end{eqnarray}
If we add $c$ times of \eqref{1} with $b$ times of \eqref{2} and $a$ times of \eqref{3}, we have
$$(a+b+c)(1+d)+(a+b+c)^2 = abc+cx^2+by^2+az^2.$$
The right hand side of the above equation is exactly $d$, so $(1+d)t+t^2+d=0$. Thus $t=1$ or $t=d$. If $t=d$, then by
replacing $d=a+b+c$ in \eqref{1}, \eqref{2} and \eqref{3} we have,
\begin{eqnarray}
  (1+a)(1+b) &=& x^2,  \label{5} \\
  (1+b)(1+c) &=& z^2,  \label{6}  \\
 (1+a)(1+c) &=& y^2.  \label{7}
\end{eqnarray}
If we multiply two of the above equations by each other and simplify the result using the third equation, we get
\begin{eqnarray*}
  x^2 y^2 &=& (1+a)^2 z^2 = z^2+z^2a^2,   \\
   x^2 z^2 &=& (1+c)^2 x^2 = x^2+x^2c^2,   \\
  z^2 y^2 &=& (1+b)^2 y^2 = y^2+y^2b^2 .  \\
\end{eqnarray*}
By replacing the above values in the square of \eqref{4}, we have  $d^2=1$ and so $d=1$.
\end{proof}
Note that Lemma \ref{33} indeed states that $\Pi_3$ holds for fields with characteristic 2. Thus from Corollary
\ref{char2 odd k,n-k}, we immediately get the following.
\begin{cor}\label{gu2}
If $\chr F=2$, $n=6$ and $k=3$, then  Conjecture \ref{guess} is true.
\end{cor}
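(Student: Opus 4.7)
The plan is to invoke Corollary \ref{char2 odd k,n-k} directly, after checking that its hypotheses are satisfied for our parameters. With $k=3$ and $n=6$, we have that $k$ is odd, $n$ is even, and moreover $n-k=3$. Thus the only remaining hypothesis is that the field $F$ satisfies condition $\Pi_k$ and $\Pi_{n-k}$, which in our case collapses to a single requirement: $F$ satisfies $\Pi_3$.

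Next, I would note that Lemma \ref{33} is precisely the statement that condition $\Pi_3$ holds for every field of characteristic $2$. Indeed, the remark just before Lemma \ref{33} observes that for symmetric $k\times k$ matrices $M$ over a field of characteristic $2$ with $k$ odd, asserting that the sum of all entries of $M$ equals $-k$ is equivalent to asserting $\tr(M)=1$; and Lemma \ref{33} established exactly this equivalence for $k=3$ under the eigenvalue condition on all $P M$.

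Putting these two ingredients together finishes the proof: given a non-\good\ matrix of the form $(I_3 \mid A)_{3\times 6}$ over a characteristic $2$ field $F$, Corollary \ref{char2 odd k,n-k} applies and forces every row sum and every column sum of $A$ to equal $1$. By Remark \ref{l2}, this is exactly the coding-theoretic statement that $e=(1,\ldots,1)\in C\cap C^\perp$, which is the condition appearing in Conjecture \ref{guess}. Since the ``in particular'' formulation of the conjecture implies the full conjecture (as observed in the paper right after Conjecture \ref{guess}), the result follows.

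There is essentially no obstacle here, since all the heavy lifting has been done in Lemma \ref{33} and Corollary \ref{char2 odd k,n-k}; this corollary is just the specialization to the smallest nontrivial case where both $k$ and $n-k$ are odd and $\Pi_3$ has been verified. The only thing to be careful about is to correctly match the parameters $k=n-k=3$ so that the single verification of $\Pi_3$ suffices for both hypotheses of Corollary \ref{char2 odd k,n-k}.
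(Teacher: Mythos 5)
Your proposal is correct and follows exactly the paper's own route: the text immediately preceding this corollary notes that Lemma \ref{33} establishes condition $\Pi_3$ for fields of characteristic $2$ (via the trace reformulation), and then specializes Corollary \ref{char2 odd k,n-k} to $k=n-k=3$. Your parameter check and the appeal to Remark \ref{l2} for the translation into the language of Conjecture \ref{guess} match the intended argument.
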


Next we show that $\Pi_2$ holds for every field and use it to show that the conjecture holds for $k=2$. For this, we
first investigate the summation of characteristic polynomials of a matrix and all of its permutations. In the sequel we
denote the $k$-square matrix with all entries 1 by $U$. Also by $A_k$ and $B_k$ we mean the set of even and odd
permutations in $S_k$, respectively.
\begin{lem}\label{psigma}
 \begin{enumerate}
  \item For $k\geq 3$ we have $$\sum _{\sigma \in A_k} P_{\sigma}= \sum _{\sigma \in B_k} P_{\sigma}=
      \frac{(k-1)!}{2}U\text{ and }\sum _{\sigma \in S_k} (\det (P_{\sigma}))P_{\sigma}= 0.$$
  \item For $k\geq 1$ we have $$\sum _{\sigma \in S_k} P_{\sigma}= (k-1)!U .$$
\end{enumerate}
\end{lem}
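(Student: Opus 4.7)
The plan is to compute the $(i,j)$ entry of each sum directly using the formula $(P_\sigma)_{ij}=\delta_{\sigma(j),i}$, and then interpret the result combinatorially by counting permutations $\sigma$ with $\sigma(j)=i$ (subject to parity constraints for part (i)).

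First I would handle part (ii), which is the easier and more basic computation. By the formula above, the $(i,j)$ entry of $\sum_{\sigma\in S_k} P_\sigma$ equals $\#\{\sigma\in S_k : \sigma(j)=i\}$. Fixing the image of one coordinate leaves $(k-1)!$ ways to permute the remaining coordinates, so every entry equals $(k-1)!$ and the sum equals $(k-1)!\,U$. This works uniformly for all $k\geq 1$.

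Next, for part (i) and $k\geq 3$, I would fix an entry $(i,j)$ and split the count of permutations $\sigma$ with $\sigma(j)=i$ according to their parity. The set $\{\sigma\in S_k : \sigma(j)=i\}$ is naturally a coset of the stabilizer $\mathrm{Stab}(j)\cong S_{k-1}$: if $i=j$ it is the stabilizer itself, and if $i\neq j$ it equals $(i\ j)\cdot \mathrm{Stab}(j)$. In the second case, left-multiplication by the transposition $(i\ j)$ is a bijection that reverses parity, so in both cases the number of even and odd $\sigma$ with $\sigma(j)=i$ agree. Since $k-1\geq 2$ forces $|A_{k-1}|=|B_{k-1}|=(k-1)!/2$, both counts equal $(k-1)!/2$. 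Therefore each $(i,j)$ entry of $\sum_{\sigma\in A_k}P_\sigma$ and of $\sum_{\sigma\in B_k}P_\sigma$ equals $(k-1)!/2$, proving the first equality of part (i). The identity $\sum_{\sigma\in S_k}(\det P_\sigma)P_\sigma=\sum_{\sigma\in A_k}P_\sigma-\sum_{\sigma\in B_k}P_\sigma=0$ is then immediate.

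There is no real obstacle here; the only mild care point is the case distinction $i=j$ versus $i\neq j$ in part (i), which is handled cleanly by the coset argument. The hypothesis $k\geq 3$ is used precisely to guarantee that $S_{k-1}$ splits evenly into even and odd permutations, which fails for $k=2$.
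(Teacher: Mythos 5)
Your proof is correct and follows essentially the same route as the paper: both compute the $(i,j)$ entry of each sum as the number of permutations with $\sigma(j)=i$ (which is $(k-1)!$, exactly half of them even), and deduce the signed sum by subtracting the two halves. Your coset argument merely fills in the detail the paper dismisses as ``easy to see,'' so there is nothing to add.
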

\begin{proof}
(i) Set $A=\sum _{\sigma \in A_k} P_{\sigma}$ and $B=\sum _{\sigma \in B_k} P_{\sigma}$. Thus $A_{ij}=\sum _{\sigma \in
A_k} (P_{\sigma})_{ij}$ and is equal to the number of even permutations that map $j$ onto $i$. It is easy to see that
the number of all permutations that map $j$ onto $i$ (and exactly half of which is even) is $(k-1)!$ and hence
$A_{ij}=\frac{(k-1)!}{2}$. Hence $A=\frac{(k-1)!}{2}U$. Similarly, $B=\frac{(k-1)!}{2}U$. Using the fact that $\det
(P_\sigma)$ is the sign of $\sigma$ for every $\sigma\in S_k$, the other claim follows easily.

(ii) If $k\leq 2$, the result is clear and if $k\geq 3$ the claim follows part (i).
\end{proof}

Let $A$ be a $k$-square matrix and $\{j_1,j_2,\ldots ,j_r\}$ be a subset of $[k]=\{1,2,\ldots ,k\}$. We use
$A_{\{j_1,j_2,\ldots ,j_r\}}$ to denote the $r\times r$ submatrix of $A$ whose rows and columns are rows and columns
$j_1,j_2,\ldots ,j_r$ of $A$, respectively.
Also by $\delta _r(A)$ we mean $\sum \det (A_{\{j_1,j_2,\ldots ,j_r\}})$, where the summation is taken over all
$r$-subsets $\{j_1,j_2,\ldots ,j_r\}$ of $[k]$. We denote the family of all $r$-subsets of $[k]$ by $\binom{[k]}{r}$.
Note that $\delta _1(A)= \tr (A) $, $\delta _k(A)=\det (A)$ and $\delta _r(I_k)=\binom{k}{r}$. More generally, we have
the following.
\begin{thm}[{\cite[Theroem 6.164]{Comb Fund Alg}}]\label{CA}
Let $A$ be a $k$-square matrix. Then the characteristic polynomial of $A$ is
$$C_A(x)=x^k -\delta _1(A)x^{k-1}+\delta _2(A)x^{k-2}+\cdots +(-1)^{k-1}\delta _{k-1}(A)x+(-1)^{k}\delta _{k}(A).$$
\end{thm}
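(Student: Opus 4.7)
The plan is to compute $C_A(x) = \det(xI_k - A)$ directly using multilinearity of the determinant in its columns. Write column $j$ of $xI_k - A$ as $xe_j + (-A)e_j$, the sum of column $j$ of $xI_k$ and column $j$ of $-A$. By $k$-fold multilinearity, the determinant expands as a sum over all subsets $S \subseteq [k]$, where for each $j \in S$ we take the $xI_k$ contribution in column $j$ and for each $j \in T := [k]\setminus S$ we take the $-A$ contribution. Call the resulting matrix $M_S$.

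Next I would evaluate $\det(M_S)$. For each $j \in S$, column $j$ of $M_S$ is $x e_j$, a scaled standard basis vector whose only nonzero entry is in row $j$. Iteratively applying Laplace expansion along these columns strips away rows and columns indexed by $S$ simultaneously (with no stray signs, since the nonzero entry of column $j$ always lies in row $j$), leaving the principal submatrix of $-A$ indexed by $T$. Pulling out the scalar $x$ from each of the $|S|$ special columns and the factor $-1$ from each of the $|T|$ columns of $-A$, one gets
$$\det(M_S) = x^{|S|}(-1)^{|T|}\det(A_{T,T}).$$

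Collecting terms according to $r = |S|$ (equivalently $|T| = k-r$) yields
$$\det(xI_k - A) = \sum_{r=0}^{k} (-1)^{k-r} x^{r} \sum_{T \in \binom{[k]}{k-r}} \det(A_{T,T}) = \sum_{r=0}^{k} (-1)^{k-r} \delta_{k-r}(A)\, x^{r}.$$
Reindexing $j = k-r$ gives $C_A(x) = \sum_{j=0}^{k} (-1)^{j} \delta_{j}(A)\, x^{k-j}$, which is exactly the claimed identity (with the convention $\delta_0(A) = 1$).

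The argument is essentially routine; the only point demanding care is the bookkeeping inside the Laplace expansion of $\det(M_S)$: one must verify that stripping out the $|S|$ standard-basis columns introduces no extra sign, so that the surviving minor is precisely the principal minor of $-A$ on rows and columns in $T$, which then contributes the clean factor $(-1)^{|T|}\det(A_{T,T})$. Once this is handled, matching powers of $x$ with the stated coefficients is immediate.
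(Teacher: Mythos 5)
Your proof is correct. The paper does not actually prove this statement---it is quoted verbatim from the cited reference (Grinberg, Theorem 6.164)---so there is no in-paper argument to compare against; your column-multilinearity expansion is the standard derivation of the coefficients of the characteristic polynomial as signed sums of principal minors. The one point you flag, that stripping the $x e_j$ columns introduces no stray sign, is indeed fine: each such column's unique nonzero entry remains on the diagonal at every stage of the iterated Laplace expansion, or, more cleanly, one can apply the same permutation to rows and columns of $M_S$ (an even operation on the determinant) to reach a block upper triangular matrix with $x I_{|S|}$ in the top-left block and $(-A)_{T,T}$ in the bottom-right, giving $\det(M_S)=x^{|S|}(-1)^{|T|}\det(A_{T,T})$ directly.
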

In what follows we give formulations for the sum of the coefficients of the characteristic polynomials of a matrix and
all of its row (or column) permutations.

\begin{thm}\label{delta2}
Suppose that $k\geq 2$. Let $A=(a_{ij})$ be a $k\times k$ matrix. Then $$\sum_{\pi \in S_k} \delta _1(P_\pi
A)=(k-1)!\l(\sum _{i,j=1}^k a_{ij}\r).$$
\end{thm}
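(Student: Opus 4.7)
My plan is to combine linearity of the trace with the identity $\sum_{\pi\in S_k} P_\pi = (k-1)! \, U$ already proved as part (ii) of Lemma \ref{psigma}. Since $\delta_1 = \tr$, what we need to evaluate is
$$\sum_{\pi\in S_k} \tr(P_\pi A) = \tr\!\l(\l(\sum_{\pi\in S_k} P_\pi\r) A\r),$$
so the bulk of the work is already hidden in Lemma \ref{psigma}.

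The main steps I would carry out are: first, rewrite the sum using linearity of $\tr$ as displayed above. Second, invoke Lemma \ref{psigma}(ii) to replace $\sum_\pi P_\pi$ by $(k-1)!\,U$, where $U$ is the $k$-square all-ones matrix, and pull the scalar outside the trace. Third, compute $\tr(UA)$ directly from the definition: since $U_{ij}=1$ for all $i,j$, we have $(UA)_{ii} = \sum_{j=1}^k a_{ji}$, hence
$$\tr(UA) = \sum_{i=1}^k \sum_{j=1}^k a_{ji} = \sum_{i,j=1}^k a_{ij}.$$
Combining these gives exactly the claimed identity $\sum_\pi \delta_1(P_\pi A) = (k-1)!\,\sum_{i,j} a_{ij}$.

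There is essentially no obstacle beyond bookkeeping; the proof is a one-line consequence of Lemma \ref{psigma}(ii) once one notices that $\delta_1$ is $\tr$ and that $\tr$ is a linear functional on matrices, so it commutes with the sum over $\pi$. (As a sanity check, one can also verify the formula directly by the alternative computation $\tr(P_\pi A) = \sum_i a_{\pi^{-1}(i),i}$, and then observing that each entry $a_{j,i}$ appears in this double sum exactly $(k-1)!$ times as $\pi$ ranges over $S_k$, giving the same answer.)
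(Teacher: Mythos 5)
Your proposal is correct and follows essentially the same route as the paper's own proof: both use linearity of the trace to pull the sum over $\pi$ inside, apply Lemma \ref{psigma}(ii) to replace $\sum_{\pi\in S_k}P_\pi$ by $(k-1)!\,U$, and finish by computing $\tr(UA)=\sum_{i,j}a_{ij}$. No discrepancies to report.
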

\begin{proof}
We have
\begin{eqnarray*}
         \sum_{\pi \in S_k} \delta _1(P_\pi A) &=&\sum_{\pi \in S_k} \tr(P_{\pi} A)=\tr\l(\l(\sum_{\pi \in S_k} P_{\pi}\r) A\r)  \\
         &=& \tr((k-1)!UA) \hfill \qquad \text{ (by Lemma \ref{psigma})}\\
& = &(k-1)!\ \tr(UA) \\
 &= &(k-1)!\ \l(\sum _{i,j=1}^k a_{ij}\r),
\end{eqnarray*}
as required.
\end{proof}
To get a similar result for $\delta_r$ when $r>1$ we use the following lemma whose proof is easy and left to the
reader.
\begin{lem}\label{correspondence}
Let $(a_i)_{i\in I}$ be a sequence of elements of $F$ with $|I|<\infty$. If there exists a function $f:I\to I$ such
that $f^2=\id_I$ and $a_i=-a_{f(i)}$ and $f(i)\neq i$ for each $i\in I$, then $\sum_{i\in I} a_i=0$.
\end{lem}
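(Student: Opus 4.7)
The plan is to partition $I$ into the orbits of $f$ and observe that each orbit contributes $0$ to the sum. Since $f^2=\id_I$ and $f(i)\neq i$ for all $i\in I$, the map $f$ is a fixed-point-free involution on $I$; in particular, its orbits all have size exactly $2$, and distinct orbits $\{i,f(i)\}$ and $\{j,f(j)\}$ are disjoint. Thus the orbits form a partition of $I$ into $2$-element subsets.

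Next I would choose a transversal $J\subseteq I$ so that $I=J\sqcup f(J)$ (that is, $J$ contains exactly one element from each orbit). For a finite $I$ this is immediate by induction on $|I|$: pick any $i_0\in I$, put it in $J$, delete $\{i_0,f(i_0)\}$ from $I$, and recurse on the remainder (which is again closed under $f$ since $f$ is an involution on $I$). With such a transversal in hand, the hypothesis $a_{f(i)}=-a_i$ gives
$$\sum_{i\in I}a_i\;=\;\sum_{j\in J}\bigl(a_j+a_{f(j)}\bigr)\;=\;\sum_{j\in J}\bigl(a_j-a_j\bigr)\;=\;0,$$
which is the desired conclusion.

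There is no real obstacle here; the only point that requires a moment's thought is the existence of the transversal $J$, but this is straightforward for finite $I$ and would be disposed of in one line as above. The proof is essentially the standard pairing argument for fixed-point-free involutions.
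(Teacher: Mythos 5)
Your proof is correct: $f$ is a fixed-point-free involution, so $I$ decomposes into two-element orbits $\{j,f(j)\}$, each contributing $a_j+a_{f(j)}=0$ to the sum. The paper explicitly leaves this lemma's proof to the reader as easy, and your pairing argument is exactly the intended standard one.
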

\begin{thm}\label{delta}
If $A$ is a $k\times k$  matrix and $2\leq r\leq k$, then $\sum_{\pi \in S_k} \delta _r(P_\pi A)=0$.
\end{thm}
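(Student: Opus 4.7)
The plan is to apply Lemma \ref{correspondence} to the double-indexed sum obtained by expanding each $\delta_r(P_\pi A)$ into its $\binom{k}{r}$ principal $r\times r$ minors. Concretely, I would rewrite
$$\sum_{\pi\in S_k}\delta_r(P_\pi A)=\sum_{(\pi,S)\in I}a_{(\pi,S)},\qquad a_{(\pi,S)}:=\det\bigl((P_\pi A)_{S}\bigr),$$
where $I=S_k\times \binom{[k]}{r}$ and $(P_\pi A)_S$ denotes the principal $r\times r$ submatrix of $P_\pi A$ on row and column set $S$. The goal is then to produce a fixed-point-free involution $f\colon I\to I$ satisfying $a_{f(\pi,S)}=-a_{(\pi,S)}$; the conclusion will then follow directly from Lemma \ref{correspondence}.

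Since the hypothesis $r\geq 2$ guarantees that each $S$ has at least two elements, let $s_1<s_2$ be the two smallest elements of $S$, and define
$$f(\pi,S):=\bigl((s_1\ s_2)\pi,\,S\bigr).$$
Because $(s_1\ s_2)$ is a nontrivial involution in $S_k$, clearly $f^2=\mathrm{id}_I$ and $f$ has no fixed points, so the hypotheses on $f$ in Lemma \ref{correspondence} hold.

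The heart of the argument is the identity $a_{f(\pi,S)}=-a_{(\pi,S)}$. I would verify this by using the factorization $P_{(s_1\ s_2)\pi}=P_{(s_1\ s_2)}P_\pi$ together with the standard fact that left multiplication by the transposition matrix $P_{(s_1\ s_2)}$ simply swaps rows $s_1$ and $s_2$ of any matrix. Thus $P_{(s_1\ s_2)\pi}A$ is $P_\pi A$ with rows $s_1$ and $s_2$ swapped; since both of these indices lie in $S$, the two principal submatrices $(P_{(s_1\ s_2)\pi}A)_S$ and $(P_\pi A)_S$ differ by a transposition of two rows (with the columns of the submatrix unaffected), so their determinants are negatives of one another.

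I do not foresee a serious obstacle here: the only delicate point is the bookkeeping that confirms the row swap on the ambient matrix restricts cleanly to a row swap on the principal submatrix $(\cdot)_S$ without disturbing the column set, which is immediate once one notes that $\{s_1,s_2\}\subseteq S$. Once this is in hand, Lemma \ref{correspondence} closes the argument in a single line.
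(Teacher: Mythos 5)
Your proposal is correct and follows essentially the same route as the paper's own proof: the same index set $S_k\times\binom{[k]}{r}$, the same fixed-point-free involution $(\pi,S)\mapsto((s_1\ s_2)\pi,S)$ built from the two smallest elements of $S$, and the same appeal to Lemma \ref{correspondence} after observing that the row swap restricts to a sign change of the principal minor. No gaps.
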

\begin{proof}

Let $I=\left\{(\pi,J)|\pi\in S_k, J\in \binom{[k]}{r}\right\}$. For each $i=(\pi,J)\in I$, set $a_i=\det (P_\pi A)_J$.
Then $\sum_{\pi \in S_k} \delta _r(P_\pi A)=\sum_{i\in I}a_i$. As $|J|=r\geq 2$, we can choose the two smallest
elements of $J$, say $j_1, j_2$. Define $f:I\to I$ by $f(i)=f(\pi, J)=((j_1\ j_2)\pi, J)\neq i$. Then $f^2=\id_I$ and
$\left(P_{(j_1\ j_2) \pi}A\right)_J$ is the matrix obtained by replacing the first and second rows of $(P_\pi A)_J$
with each other. Thus $a_i=-a_{f(i)}$ and the claim follows from Lemma \ref{correspondence}.
\end{proof}

\begin{cor}\label{c1dd}
Let $A=(a_{ij})$ be a $k\times k$ matrix. Then $$\sum_{\pi \in S_k} C_{P_\pi}(x)
=k!\ x^k-(k-1)!\left(\sum _{i,j=1}^k a_{ij}\right)x^{k-1}.$$
\end{cor}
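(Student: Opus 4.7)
The plan is to apply Theorem \ref{CA} term-by-term to $P_\pi A$ and then sum the identity over $\pi\in S_k$, invoking Theorems \ref{delta2} and \ref{delta} to evaluate the two types of coefficient sums that appear.

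More concretely, first I would expand
$$C_{P_\pi A}(x) = x^k - \delta_1(P_\pi A)x^{k-1} + \sum_{r=2}^{k} (-1)^r \delta_r(P_\pi A)\, x^{k-r}$$
using Theorem \ref{CA}. Summing over all $\pi\in S_k$ and interchanging the (finite) sums produces
$$\sum_{\pi\in S_k} C_{P_\pi A}(x) = k!\, x^k - \Bigl(\sum_{\pi\in S_k}\delta_1(P_\pi A)\Bigr)x^{k-1} + \sum_{r=2}^{k} (-1)^r \Bigl(\sum_{\pi\in S_k}\delta_r(P_\pi A)\Bigr) x^{k-r}.$$

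Next I would kill the higher-degree coefficient sums: Theorem \ref{delta} gives $\sum_{\pi\in S_k}\delta_r(P_\pi A)=0$ for every $2\le r\le k$, so the entire sum over $r\ge 2$ vanishes. For the $x^{k-1}$ coefficient I would apply Theorem \ref{delta2}, which yields $\sum_{\pi\in S_k}\delta_1(P_\pi A) = (k-1)!\sum_{i,j=1}^{k} a_{ij}$. Combining these and the constant term $k!\,x^k$ produces the claimed identity.

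I do not anticipate a genuine obstacle here — the result is essentially a bookkeeping corollary of the two preceding theorems and Theorem \ref{CA}. The only mild subtlety worth noting in writing it up is the degenerate case $k=1$, where the $r\ge 2$ sum is vacuous and the formula still reads correctly as $C_{P_\pi A}(x)=x-a_{11}$. Otherwise the argument is a direct linear combination of identities already established.
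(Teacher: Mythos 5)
Your argument is correct and is exactly the paper's approach: the published proof just cites Theorems \ref{CA}, \ref{delta2} and \ref{delta}, and your write-up supplies precisely that bookkeeping (including the correct reading of $C_{P_\pi}$ in the statement as $C_{P_\pi A}$, which the right-hand side requires).
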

\begin{proof}
It is clear by Theorems \ref{CA}, \ref{delta} and \ref{delta2}.
\end{proof}

Recall that, if $\pi \in S_k$ decomposes into disjoint cycles as
$$ \pi =(a_1 a_2 \ldots a_{i_0})(a'_1 a'_2 \ldots a'_{i_1})\cdots(a^{(t)}_1 a^{(t)}_2 \ldots a^{(t)}_{i_t}),
$$
then the characteristic polynomial of $P_{\pi}$ is
$$C_{P_{\pi}}(x)=(x^{i_{0}}-1)(x^{i_{1}}-1)\cdots(x^{i_{t}}-1)(x-1)^{k-\sum _{j=0}^t i_j}.$$
For example, if $\pi=(12)(34)(567)\in S_8$, then $C_{P_{\pi}}(x)=(x^2-1)^2 (x^3-1)(x-1).$
\begin{cor}\label{c2dd}
The sum of characteristic polynomials of all $k\times k$ permutation matrices is $k!(x^k-x^{k-1}).$
\end{cor}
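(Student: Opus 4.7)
The plan is to apply Corollary \ref{c1dd} directly with the choice $A = I_k$. Since $P_\pi I_k = P_\pi$, the left-hand side of Corollary \ref{c1dd} becomes exactly the object of interest, namely $\sum_{\pi \in S_k} C_{P_\pi}(x)$. It remains only to compute the coefficient on the right-hand side.

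For $A = I_k = (\delta_{ij})$, the double sum of entries is
\[
\sum_{i,j=1}^k \delta_{ij} = k,
\]
so Corollary \ref{c1dd} gives
\[
\sum_{\pi \in S_k} C_{P_\pi}(x) = k!\, x^k - (k-1)! \cdot k \cdot x^{k-1} = k!\, x^k - k!\, x^{k-1} = k!(x^k - x^{k-1}),
\]
which is exactly the desired identity.

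The only subtlety to mention is the edge case $k=1$, where Corollary \ref{c1dd} may not formally apply (its proof uses Theorem \ref{delta2} and Theorem \ref{delta}, which both assume $k \geq 2$). In this case $S_1$ has a single element whose permutation matrix is $(1)$, with characteristic polynomial $x - 1 = 1!(x - 1)$, so the identity holds trivially. Thus there is no real obstacle; the corollary is an immediate specialization of the preceding computation, and the proof is essentially a one-line substitution plus a trivial base case.
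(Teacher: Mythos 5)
Your proof is correct and matches the paper's own argument exactly: the paper's proof of Corollary \ref{c2dd} is the one-line ``Apply Corollary \ref{c1dd} with $A=I_k$,'' which is precisely your substitution. Your extra remark about the $k=1$ edge case is a reasonable bit of care that the paper omits, but it does not change the approach.
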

\begin{proof}
Apply Corollary \ref{c1dd} with $A=I_k$.
\end{proof}

\begin{cor}\label{c3dd}
Let $M$ be a $k\times k$ matrix over a field $F$ such that $-1$ is an eigenvalue of $PM$ for all permutation matrices
$P$. Then $$(k-1)! \ (k+\sum _{i,j=1}^k M_{ij})=0.$$
\end{cor}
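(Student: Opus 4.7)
The plan is to view the hypothesis spectrally and then simply integrate it against the character-like identity already available in Corollary \ref{c1dd}. The key observation is that ``$-1$ is an eigenvalue of $PM$'' is the same as saying $C_{PM}(-1)=0$, where $C_{PM}(x)=\det(xI-PM)$ is the characteristic polynomial. Since the hypothesis is that this vanishes for \emph{every} permutation matrix $P$, I can freely sum the identities $C_{P_\pi M}(-1)=0$ over all $\pi\in S_k$ to obtain
\[
\sum_{\pi\in S_k} C_{P_\pi M}(-1)=0.
\]

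The second step is to compute the left-hand side using Corollary \ref{c1dd} applied with the matrix $A=M$. That corollary says
\[
\sum_{\pi\in S_k} C_{P_\pi M}(x)=k!\,x^k-(k-1)!\l(\sum_{i,j=1}^k M_{ij}\r)x^{k-1}.
\]
Substituting $x=-1$ gives
\[
0=k!(-1)^k-(k-1)!\l(\sum_{i,j=1}^k M_{ij}\r)(-1)^{k-1}=(-1)^k(k-1)!\l(k+\sum_{i,j=1}^k M_{ij}\r),
\]
and the desired conclusion follows immediately after discarding the unit $(-1)^k$.

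I do not foresee any real obstacle here: the corollary is essentially just Corollary \ref{c1dd} specialized at $x=-1$ and combined with the hypothesis. The only thing to be mindful of is the sign bookkeeping when pulling $(-1)^{k-1}$ out of the second term so that the quantity inside the parentheses comes out as $k+\sum M_{ij}$ rather than $-k-\sum M_{ij}$ or $\sum M_{ij}-k$.
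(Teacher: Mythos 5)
Your proof is correct and is essentially the paper's own argument: both sum the vanishing values $C_{P_\pi M}(-1)=0$ over all $\pi\in S_k$ and evaluate the sum via Corollary \ref{c1dd} at $x=-1$, with the same sign bookkeeping yielding $(-1)^k(k-1)!\bigl(k+\sum_{i,j}M_{ij}\bigr)=0$. Nothing further is needed.
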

\begin{proof}
Since $-1$ is a root of the characteristic polynomial of $PM$, for any permutation matrix $P$,  by Corollary
\ref{c1dd}, $0=\sum_{\pi \in S_k} \det((-1)I_k-P_\pi M)=k!\ (-1)^k-(k-1)!\ (\sum _{i,j=1}^k M_{ij})(-1)^{k-1}.$ Hence
$(k-1)!(k+\sum _{i,j=1}^k M_{ij})=0$.
\end{proof}

Note that if $(k-1)!\neq 0$ in $F$, that is, if $\chr F\geq k$, then in the above lemma we could deduce that the sum of
all entries of $M$ is $-k$. This is stronger than condition $\Pi_k$, since $M$ here is not assumed to be symmetric.

\begin{cor}\label{c4dd}
If $\chr F=0$ or if $\chr F\geq k$, then condition $\Pi_k$ holds for $F$. In particular, every field satisfies $\Pi_1$,
$\Pi_2$ and $\Pi_3$.
\end{cor}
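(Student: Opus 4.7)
The plan is to deduce this essentially directly from Corollary \ref{c3dd}. That corollary shows that whenever $M$ is a $k \times k$ matrix (not even necessarily symmetric) such that $-1$ is an eigenvalue of $PM$ for all permutation matrices $P$, we have $(k-1)!\bigl(k + \sum_{i,j=1}^k M_{ij}\bigr) = 0$ in $F$. So the whole task reduces to arguing that the factor $(k-1)!$ is nonzero in $F$ under the hypothesis on the characteristic, allowing us to cancel it and obtain $\sum_{i,j} M_{ij} = -k$, which is exactly condition $\Pi_k$ (and in fact more, since we do not need symmetry of $M$ in this range).

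The key observation is the number-theoretic fact that the prime factors of $(k-1)!$ are exactly the primes $p \leq k-1 < k$. Hence if $\chr F = 0$, then $(k-1)!$ is a nonzero integer viewed in $F$, and if $\chr F = p \geq k$, then $p$ does not divide $(k-1)!$, so again $(k-1)! \neq 0$ in $F$. In either case we may cancel $(k-1)!$ in the identity supplied by Corollary \ref{c3dd}, and $\Pi_k$ follows immediately.

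For the "in particular" claim, $\Pi_1$ and $\Pi_2$ are covered by the main assertion with no case analysis: for $k = 1$ and $k = 2$ the condition $\chr F \geq k$ holds vacuously (every field has $\chr F \geq 2$, since $\chr F$ is either $0$ or a prime). For $\Pi_3$, the main assertion handles every field with $\chr F \neq 2$ (so $\chr F = 0$ or $\chr F \geq 3$), and the remaining characteristic-$2$ case is precisely what Lemma \ref{33} was set up to handle: it shows that for a $3 \times 3$ symmetric matrix $M$ over a field of characteristic $2$, having $-1$ as an eigenvalue of $PM$ for all $P$ forces $\tr(M) = 1$, which in characteristic $2$ with $k = 3$ is the same as $\sum_{i,j} M_{ij} = -k$ (since off-diagonal entries of a symmetric matrix contribute in pairs and $3 \equiv 1 \pmod{2}$).

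There is no real obstacle here — the whole corollary is a packaging statement. The only minor thing to be careful about is that $\Pi_k$ is formulated for symmetric $M$ whereas Corollary \ref{c3dd} makes no such restriction; this is a strengthening rather than an issue, and I would mention it as a remark so the reader sees that the hypothesis of symmetry in $\Pi_k$ is not used when $(k-1)! \neq 0$ in $F$.
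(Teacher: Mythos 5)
Your proof is correct and follows essentially the same route as the paper: invoke Corollary \ref{c3dd}, cancel $(k-1)!$ under the characteristic hypothesis, and fall back on Lemma \ref{33} for the characteristic-$2$ case of $\Pi_3$. The remark that symmetry of $M$ is not needed when $(k-1)!\neq 0$ is also made explicitly in the paper, just before the corollary.
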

\begin{proof}
This is an immediate consequence of Corollary \ref{c3dd}. For $\Pi_3$ note that by Lemma \ref{33}, this condition holds
for fields with characteristic 2.
\end{proof}

\begin{cor}\label{char=0}
Suppose that $F$ is a field with characteristic 0 in which $\f{-k}{n-k}$ is not a square. Then every $[n,k]$-code over
$F$ is \good.
\end{cor}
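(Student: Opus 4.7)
The plan is to argue by contradiction, extracting a square root of $\f{-k}{n-k}$ in $F$ from a hypothetical non-\good\ code. Suppose, for contradiction, that $C$ is a non-\good\ $[n,k]$-code over $F$. The hypothesis implicitly forces $1\le k\le n-1$: if $k=n$ the expression $\f{-k}{n-k}$ is undefined (and $F^n$ is trivially \good\ via $D=0$), while if $k=0$ the expression equals $0$, which is a square. The case $k=1$ is disposed of immediately by Corollary \ref{guess2}, which would force $\chr F\mid n$, contradicting $\chr F=0$. So I may assume $2\le k\le n-1$.

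By Corollary \ref{c1} (together with Corollary \ref{c2}) I may pass to a permutation-equivalent code whose generator matrix is in standard form, so assume $G=(I_k\mid A)$ is non-\good. Since $\chr F = 0$, Corollary \ref{c4dd} guarantees that $F$ satisfies condition $\Pi_k$. Proposition \ref{alpha}(i) applied to $G$ then produces a scalar $\beta\in F$ with $(n-k)\beta^2=-k$.

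To finish, observe that $\chr F=0$ and $k\ge 1$ imply $k$ is nonzero in $F$; the relation $(n-k)\beta^2=-k$ therefore forces $n-k$ to be nonzero in $F$ as well, and dividing yields $\beta^2=\f{-k}{n-k}$. This contradicts the hypothesis that $\f{-k}{n-k}$ is not a square in $F$, completing the argument.

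The proof is essentially a packaging of the machinery already developed in Proposition \ref{alpha} and Corollary \ref{c4dd}, so I do not anticipate a genuine technical obstacle. The only points requiring care are verifying the applicability of $\Pi_k$ in characteristic zero and correctly dispatching the boundary cases $k\in\{0,1,n\}$, where either the hypothesis trivializes or Proposition \ref{alpha} (which needs $k\ge 2$) does not directly apply.
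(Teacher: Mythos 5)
Your proof is correct and takes essentially the same route as the paper's: invoke Corollary \ref{c4dd} to get condition $\Pi_k$ in characteristic $0$, then apply Proposition \ref{alpha} to extract $\beta$ with $(n-k)\beta^2=-k$, contradicting the non-square hypothesis. The paper's proof is a one-line remark; your version merely adds the explicit (and harmless) dispatch of the boundary cases $k\in\{0,1,n\}$ where Proposition \ref{alpha} does not directly apply.
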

\begin{proof}
Just note that by Corollary \ref{c4dd}, $F$ satisfies $\Pi_k$ and hence according to Proposition \ref{alpha} if there
exists a non-\good\ $[n,k]$-code over $F$, then $\f{-k}{n-k}$ is a square in $F$.
\end{proof}

By an argument similar to the above proof one can find positive integers $n,k$ and prime numbers  $p$ such that every
$[n,k]$-code over $F_p$ is \good. For this we recall the\emph{ Legendre symbol} $\tleg{a}{p}$, for $a\in \z$ with
$\gcd(a,p)=1$ which is 1 if $a$ is a quadratic residue modulo $p$ and $-1$ otherwise (see, for example, \cite[Section
8.4]{roman} for properties of $\tleg{a}{p}$).

\begin{ex}
Suppose that $n=tk$ for some $t\in \n$, $p$ is a prime number with $2\leq k<p$ and $F=F_{p^m}$ where $m$ is odd.  In
each of the following cases every $[n,k]$-code over $F$ is \good:
\begin{itemize}
\item $p \equiv -1 \text{ or } -3 \mod 8$ and $t=3$;
\item $p \equiv  -1\text{ or } -3 \text{ or } -7 \mod 12$ and $t=4$;
\item $p\equiv -1 \mod 4$ and $t\equiv a^2+1 \mod p$ for some $a\in \z$ with $p\nmid a$.
\end{itemize}
The reason is that in each of these cases $n-k\neq 0$ in $F$ and since by Corollary \ref{c4dd}, $F$ satisfies condition
$\Pi_k$, according to Proposition \ref{alpha}, $\f{-k}{n-k}$ must be a square in $F$. Noting that if an element is not
a square in a field $F_p$, then it is not a square in any finite dimensional extension of $F_p$ with odd dimension such
as $F$, it suffices to show that in each of these cases $\f{-k}{n-k}$ is a quadratic nonresidue modulo $p$. We have
$$\leg{-k/n-k}{p}=\leg{-(n-k)/k}{p}=\leg{-t+1}{p}=\leg{-1}{p}\leg{t-1}{p}=-1,$$
where the last equality follows from \cite[Theorem 8.4.2]{roman} in all of the three cases. It worths mentioning that
this example shows that in the aforementioned cases Conjecture \ref{guess} holds.
\end{ex}

It should be mentioned that the authors could not find any field $F$ and positive integer $k$ such that $F$ does not
satisfy $\Pi_k$ or even the stronger condition in which the word ``symmetric'' is removed from the definition of
$\Pi_k$. Thus we propose the following Conjecture.

\begin{conj}\label{conj Pk}
Every field satisfies $\Pi_k$ for every positive integer $k$. More generally, for any $k\times k$ matrix $M$ (not
necessarily symmetric) over $F$, if $PM$ has $-1$ as an eigenvalue for every permutation matrix $P$, then the sum of
all entries of $M$ is $-k$.
\end{conj}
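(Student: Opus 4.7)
The plan is to split into cases based on the characteristic of $F$. When $\chr F = 0$ or $\chr F \geq k$, the factor $(k-1)!$ is nonzero in $F$, so Corollary \ref{c3dd} instantly delivers the conclusion $\sum_{i,j} M_{ij} = -k$, regardless of whether $M$ is symmetric; this recovers (and slightly strengthens) Corollary \ref{c4dd}. The genuine content of the conjecture lies in the modular range $\chr F = p \leq k-1$.

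In that range, the averaging identity underlying Corollary \ref{c3dd} becomes vacuous, because Theorem \ref{delta} shows $\sum_{\pi \in S_k} \delta_r(P_\pi M) = 0$ for all $r \geq 2$, so only the $\delta_1$ term survives in $\sum_\pi C_{P_\pi M}(-1)$, and it carries the unwanted $(k-1)!$ factor. My plan is to refine the summation argument of Corollary \ref{c1dd} by using weighted sums $\sum_\pi c_\pi \det(xI - P_\pi M)$ for carefully chosen coefficients $c_\pi \in F$. Uniform weights only detect $\delta_1$; non-uniform weights (sign weights via the alternating group, indicator functions of Young subgroups, or more general class functions on $S_k$) should produce identities in which the higher $\delta_r$ contribute in a controlled way, and by combining enough such relations one should be able to isolate $k + \sum_{i,j} M_{ij}$ without the spurious $(k-1)!$ factor.

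As a concrete stepping stone I would first generalize the hands-on computation in Lemma \ref{33}, which establishes $\Pi_3$ in characteristic $2$ by combining the four equations $\det(I_3 + P_\pi M)=0$ for $\pi \in \{(1\,2),(1\,3),(2\,3),(1\,2\,3)\}$. For small pairs such as $(k,p)=(4,2),(4,3),(5,2)$ I would use computer algebra to find the explicit combinations of the $k!$ equations $\det(I_k + P_\pi M)=0$ that collapse to $k + \sum M_{ij}$, then try to recognize those combinations as natural class functions on $S_k$ so that a uniform family of identities emerges. Equivalently one can ask a Gr\"obner-basis question: show that the ideal of $F[M_{ij}]$ generated by $\{\det(I_k + P_\pi M):\pi\in S_k\}$ contains the linear polynomial $k + \sum_{i,j} M_{ij}$.

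The main obstacle is that for $\chr F \leq k-1$ the modular representation theory of $S_k$ is no longer semisimple, and this is exactly why the uniform averaging argument loses information. Finding a substitute for the scalar $(k-1)!$ that is nonzero across all primes $p \leq k-1$ appears to demand either new structural input on the failure of Maschke's theorem over $F$, or a new combinatorial identity for determinants of permutation-twisted matrices. I therefore expect that the first complete proof will proceed prime by prime, exploiting features of $F_p$ such as the Frobenius endomorphism and its action on the system $\det(I_k + P_\pi M)=0$.
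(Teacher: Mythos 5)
This statement is posed in the paper as a \emph{conjecture}: the paper offers no proof of it, only partial results (Corollary \ref{c3dd} and \ref{c4dd} for $\chr F=0$ or $\chr F\geq k$, and Lemma \ref{33} establishing $\Pi_3$ in characteristic $2$). Your proposal correctly recovers exactly those known cases --- the observation that Corollary \ref{c3dd} settles everything when $(k-1)!\neq 0$ in $F$, even without symmetry, is precisely the remark the authors make after that corollary --- but for the genuinely open range $\chr F=p\leq k-1$ (beyond the single case $\Pi_3$, $p=2$) what you have written is a research programme, not a proof. You explicitly acknowledge the obstruction (failure of semisimplicity of $F[S_k]$) without overcoming it, so the conjecture remains unproved by your argument.

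One concrete weakness in the proposed strategy is worth naming. A weighted sum $\sum_\pi c_\pi\det(xI-P_\pi M)$ with scalar weights $c_\pi\in F$ lies in the $F$-linear span of the generators $\det(I+P_\pi M)$; writing $\det(I+B)=\sum_r\delta_r(B)$, its degree-one part is $\sum_{i,j}N_{ij}M_{ji}$ with $N_{ij}=\sum_{\pi(j)=i}c_\pi$, and to isolate $\lambda\bigl(k+\sum_{i,j}M_{ij}\bigr)$ with $\lambda\neq 0$ you would need $N_{ij}=\lambda$ for all $i,j$ \emph{and} the identical vanishing of $\sum_\pi c_\pi\delta_r(P_\pi M)$ for every $r\geq 2$ and every $M$. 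You give no reason such weights exist when $p\leq k-1$, and the one nontrivial case actually settled in the paper, Lemma \ref{33}, suggests they may not: that proof multiplies the equations by the matrix entries $a,b,c$ and even squares one of them, i.e.\ it works with polynomial coefficients in the ideal generated by the $\det(I_k+P_\pi M)$, not with constant-coefficient linear combinations. So the Gr\"obner-basis formulation you mention is the right level of generality, but the class-function averaging refinement you lead with is likely too weak on its face, and in any case no complete argument is supplied. The gap is the entire modular case.
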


Next, using Corollary \ref{c4dd}, we prove that Conjecture \ref{guess} holds when $k=2$. For this, first we need a
lemma.
\begin{lem}\label{-1root2*2}
Let $M$ be a $2\times 2$ symmetric matrix over a field $F$ such that $-1$ is an eigenvalue of $PM$ for all permutation
matrices $P$. Then $$M=\begin{pmatrix} \alpha & -1-\alpha \\ -1-\alpha & \alpha \end{pmatrix}.$$
\end{lem}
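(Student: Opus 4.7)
The plan is to write $M=\begin{psmat} a & b \\ b & c\end{psmat}$ and just write down the three equations coming from (a) $\det(I+M)=0$, (b) $\det(I+P_{(1\,2)}M)=0$, and (c) the conclusion of Corollary \ref{c4dd}, which says that $\Pi_2$ holds over every field, so the sum of all entries of $M$ equals $-2$.

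Concretely, condition (a) becomes $(a+1)(c+1)=b^2$, condition (b) becomes $(b+1)^2=ac$, and (c) gives $a+c=-2-2b$. Substituting $a+c=-2-2b$ into the expansion of (a) yields $ac=b^2+2b+1=(b+1)^2$, which is exactly (b), so (a) and (c) together already force (b). Knowing $a+c=-2(b+1)$ and $ac=(b+1)^2$, the pair $(a,c)$ consists of the two roots of $t^2+2(b+1)t+(b+1)^2=(t+(b+1))^2$, so $a=c=-(b+1)$. Setting $\alpha:=a=c$ gives $b=-1-\alpha$, which is precisely the claimed form of $M$.

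No step is really an obstacle here: the only mild subtlety is realizing that we do not have to use any permutation matrix other than $I$ and $P_{(1\,2)}$, because once Corollary \ref{c4dd} supplies the trace/row-sum identity the remaining relations collapse to a perfect square. The proof is therefore a short three-line computation, and I would present it essentially as displayed equations followed by the factorization $(t+(b+1))^2$.
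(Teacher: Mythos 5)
Your proof is correct and follows essentially the same route as the paper: the paper likewise reduces the two determinant conditions for $I$ and $P_{(1\,2)}$ to $ac=(b+1)^2$ and $a+c=-2(b+1)$ and then reads off $a=c=-(b+1)$ from the resulting perfect square. The only difference is that you import the sum-of-entries relation from Corollary \ref{c4dd} (condition $\Pi_2$), which is unnecessary --- as your own computation shows, the three relations are dependent and the two determinant equations alone already yield both the sum and the product, so the lemma can be kept self-contained without invoking the characteristic-polynomial machinery.
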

\begin{proof}
From the two equations $\det(I+M)=0$ and $\det(I+P_{(1\ 2)}M)=0$ for the matrix $M=\begin{psmat} \alpha& \beta \\
\beta & \gamma \end{psmat}$, it follows that $\alpha\gamma=(\beta+1)^2$ and $\alpha+\gamma=-2(\beta+1)$. From this the
result follows easily.
\end{proof}
\begin{thm}\label{k=2}
If $G_{2\times n}$ is in standard form and is not \good, then the sum of the entries on any row of $G$ is $0$ and the
sum of the entries on any column of $G$ is $1$. In other words, Conjecture \ref{guess} is true in the case that $k=2$.
\end{thm}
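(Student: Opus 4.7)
The plan is to extract the structure of $AA^t$ from the hypothesis that $G$ is non-\good, then use two carefully chosen permutations to force the column-sum condition, after which the row-sum condition falls out by direct substitution.

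Write $G = (I_2 \mid A)$ and denote the columns of $A$ by $v_\ell = (x_\ell, y_\ell)^t$ for $\ell = 3, \ldots, n$; set $c_\ell = x_\ell + y_\ell$, and let $s_i$ be the sum of the entries of row $i$ of $A$. Applying Lemma~\ref{eigen} with $Q = I_{n-2}$, the symmetric matrix $M = AA^t$ has $-1$ as an eigenvalue of $PM$ for every $2\times 2$ permutation matrix $P$; Lemma~\ref{-1root2*2} then yields $AA^t = \begin{psmat}\alpha & -1-\alpha \\ -1-\alpha & \alpha\end{psmat}$ for some $\alpha \in F$. In particular $\sum_\ell x_\ell^2 = \sum_\ell y_\ell^2 = \alpha$ and $\sum_\ell x_\ell y_\ell = -1 - \alpha$.

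For the heart of the argument, I consider, for each $j \in \{3, \ldots, n\}$, two permutations in $S_n$: the transposition $\tau_j = (1\ j)$ and the 3-cycle $\rho_j = (1\ 2\ j)$. Computing $GP_\sigma G^t$ entry by entry via $(GP_\sigma G^t)_{ii'} = \sum_\ell G_{i,\sigma(\ell)} G_{i',\ell}$, substituting the structural relations for $AA^t$, and expanding the resulting $2\times 2$ determinants, I expect to arrive at
\[\det(GP_{\tau_j}G^t) = -(1+\alpha)(c_j - 1)^2 \quad\text{and}\quad \det(GP_{\rho_j}G^t) = -\alpha(c_j - 1)^2.\]
Both must vanish since $G$ is not \good. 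Hence if $c_j \neq 1$ for some $j$, we would need $\alpha = -1$ and $\alpha = 0$ simultaneously, which is impossible. Therefore $c_j = 1$ for every $j$, so every column sum of $G$ equals $1$.

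The row sums then fall out quickly: substituting $y_\ell = 1 - x_\ell$ into $\sum_\ell x_\ell y_\ell = -1 - \alpha$ gives $s_1 - \sum_\ell x_\ell^2 = s_1 - \alpha = -1 - \alpha$, so $s_1 = -1$; the symmetric computation (using $\sum y_\ell^2 = \alpha$) yields $s_2 = -1$. Hence each row sum of $G$ equals $1 + s_i = 0$, matching Conjecture~\ref{guess} in the case $k=2$. The main technical hurdle is the determinant computation for the 3-cycle $\rho_j$: here $GP_{\rho_j}G^t$ is non-symmetric and each of its four entries involves $x_j$ and $y_j$ individually, so patient algebra is needed to witness that all $x_j, y_j$ dependence cancels, leaving the clean factor $-\alpha(c_j - 1)^2$.
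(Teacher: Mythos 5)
Your proposal is correct --- I verified both determinant identities you assert, namely
\[
\det\bigl(GP_{(1\ j)}G^t\bigr)=-(1+\alpha)(c_j-1)^2,\qquad
\det\bigl(GP_{(1\ 2\ j)}G^t\bigr)=-\alpha(c_j-1)^2,
\]
which follow by expanding $(GP_\sigma G^t)_{ii'}=\sum_\ell G_{i,\sigma(\ell)}G_{i',\ell}$ and using $\sum_\ell x_\ell^2=\sum_\ell y_\ell^2=\alpha$, $\sum_\ell x_\ell y_\ell=-1-\alpha$ --- but your route is genuinely different from the paper's. The paper first invokes Proposition \ref{alpha} (hence condition $\Pi_2$ via Corollary \ref{c4dd}) to show that all column sums of $A$ equal a common constant $\beta$ with $(n-2)\beta^2=-2$, and then spends most of the proof showing $\beta=1$, splitting into the cases $\chr F=2$ and $\chr F\neq 2$ and analyzing several products of transpositions with further subcases. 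You instead work column by column: since $\alpha$ and $1+\alpha$ cannot vanish simultaneously, the two displayed determinants force $c_j=1$ for every $j$ directly, and the row sums then drop out by substituting $y_\ell=1-x_\ell$ into $\sum_\ell x_\ell y_\ell=-1-\alpha$. This buys you a shorter, characteristic-free argument that uses only Lemma \ref{eigen} (with $Q=I_{n-2}$) and Lemma \ref{-1root2*2}, bypassing $\Pi_2$ and Proposition \ref{alpha} entirely; what it costs is that the two determinant computations, which you flag as the technical hurdle and only assert, must actually be carried out --- they do check out, so the proof is complete once they are written down. (The degenerate case $n=2$ is vacuous since $I_2$ is \good.)
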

\begin{proof}
We can assume that $n\geq 4$ and for simplicity we set $m=n-2$ and let $G=(I_2|A)$ where
$$A=\begin{pmatrix} a_1 & \cdots &a_{m} \\
 b_1& \cdots & b_m
\end{pmatrix}.$$
Note that $\Pi_2$ holds by Corollary \ref{c4dd}. Thus according to Proposition \ref{alpha}, there is a $\beta \in F$
such that $a_i+b_i=\beta$ for all $i$ and $m\beta^2=-2$. By Lemmas \ref{eigen} and \ref{-1root2*2}, the sum of each row
of $M=AA^t$ is 1. Therefore, we have $\sum_{i=1}^m  a_i^2 + \sum_{i=1}^m  a_ib_i=-1$. Replacing $b_i$ with $\beta-a_i$,
we deduce that $\beta\sum_{i=1}^m  a_i=-1$. Similarly, we have $\beta\sum_{i=1}^m  b_i=-1$. Consequently, if we let
$\alpha=\sum_{i=1}^m a_i$, then $\sum_{i=1}^m b_i=\alpha$ and $\alpha\beta=-1$. Therefore, we just need to prove
$\beta=1$.

First suppose that $\chr F=2$. Then from $\beta\neq 0$ and $m\beta^2=-2=0$ it follows that $m$ and hence $n$ are even.
Let $P=P_\sigma$ for $\sigma=(1\ 3)(2\ 4)(5 \ 6)(7\ 8) \cdots (n\ n-1)$. Then
$$GPG^t=\begin{pmatrix}
          0 & b_1+a_2+\gamma \\
          b_1 +a_2+\gamma & 0
        \end{pmatrix}\ \text{where} \ \gamma=\sum_{3\leq i\leq n \atop  i \text{ is odd } } a_ib_{i+1}+a_{i+1}b_i.$$
Using $b_i=\beta-a_i$, we get $\gamma=\beta\sum_{i=3}^m a_i=\beta(\alpha+a_1+a_2)=1+\beta a_1+\beta a_2$. Since
$\det(GPG^t)=0$, we see that $0=\beta+a_1+a_2+\gamma=(\beta+1)(a_1+a_2+1)$. If $\beta=1$, we are done. Else
$a_1+a_2=1$. If $n=4$, this means that $\alpha=1$ and again, we are done. Suppose $n>4$. By a similar argument
$a_i+a_j=1$ for each $1 \leq i\neq j\leq m$ and hence $a_i=a_j$ for all $i,j$. Then, since $m$ is even, we have
$\alpha=ma_1=0$, a contradiction, which concludes the proof in the case that $\chr F=2$.

Now assume that $\chr F\neq 2$. Let $\sigma=(1\ 3)(2 \ 4)$ and $\tau=(1\ 4)(2\ 3)$. Then the equations $\det(GP_\sigma
G^t)=0$ and $\det(GP_\tau G^t)=0$ read as
\begin{align}
    (2a_1+R)(2b_2+S)-(b_1+a_2+T)^2 & = 0 \label{eq1} \\
    (2a_2+R)(2b_1+S)-(b_2+a_1+T)^2 &= 0, \label{eq2}
\end{align}
where $R=\sum_{i>2}a_i^2$, $S=\sum_{i>2} b_i^2$ and $T=\sum_{i>2}a_ib_i$. Note that since $b_i=\beta-a_i$ the following
equations hold.
\begin{align}
b_2-b_1&=a_1-a_2 \label{eq3}\\
b_1^2-b_2^2+a_2^2-a_1^2& = 2\beta(a_2-a_1) \label{eq4} \\
a_1b_2-a_2b_1&=\beta(a_1-a_2) \label{eq5}
\end{align}
Also we have
\begin{equation} R+S+2T=\sum_{i>2} (a_i+b_i)^2=(m-2)\beta^2. \label{eq6}
\end{equation}
If we subtract \eqref{eq2} from \eqref{eq1} and simplify it using Equations \eqref{eq3}--\eqref{eq6}, we get
$$2(a_1-a_2)(4\beta+(m-2)\beta^2)=0.$$

Therefore, we have two cases. Case 1: $4\beta+(m-2)\beta^2\neq 0$. Then $a_1=a_2$ and by a similar argument $a_i=a_j$
and hence $b_i=b_j$ for all $i,j$. Thus $0\neq \alpha=ma_1=mb_1$ and we have $a_i=b_j=a$ for all $i,j$ and some $a\in
F$. Consequently, $2ma^2=\alpha\beta=-1$. If we set $x=(m-1)a^2=\f{-1}{2}-a^2$, then we have
$$0=\det(GP_{(1\ 3)}G^t)=2a-a^2+x=2a-2a^2-\f{1}{2}=-\f{(2a-1)^2}{2},$$
that is, $a=\f{1}{2}$ and $\beta=2a=1$, as required.

Case 2: $4\beta+(m-2)\beta^2 = 0$. Since $\beta\neq 0$, it holds that $4+(m-2)\beta=0$. As $\chr F>2$ and $4\neq 0$, it
follows that $\beta=\f{-4}{m-2}$. Now if we solve the equation $m\l(\f{-4}{m-2}\r)^2=m\beta^2=-2$, we get $m=-2$ and
hence $\beta=1$, which concludes the proof.
\end{proof}
Using Lemma \ref{gt}, we immediately get the following corollary.
\begin{cor}
In the case that $n=k+2$, Conjecture \ref{guess} holds true.
\end{cor}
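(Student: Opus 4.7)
The plan is to invoke Lemma \ref{gt} as a duality that swaps the roles of $k$ and $n-k$, and then quote Theorem \ref{k=2}. Concretely, suppose $G_{k\times n}$ is a non-\good\ matrix in standard form, where $n=k+2$. Writing $G=(I_k\mid A)$ with $A$ of size $k\times 2$, Lemma \ref{gt} tells us that
\[
G'=(I_{n-k}\mid -A^t)=(I_2\mid -A^t)
\]
is also non-\good. Since $G'$ is a $2\times n$ matrix in standard form, Theorem \ref{k=2} applies and yields that every row of $G'$ sums to $0$ and every column of $G'$ sums to $1$.

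The remaining step is a bookkeeping translation from $G'$ back to $G$. The $s$-th row of $G'$ is $(e_s\mid -A^t_{s,\bullet})$, whose entries sum to $1-\sum_{i=1}^{k}A_{i,s}$; the row-sum condition therefore forces $\sum_{i=1}^{k}A_{i,s}=1$ for $s=1,2$, i.e.\ each column of $A$ sums to $1$. Combined with the identity block $I_k$ of $G$ (whose columns automatically sum to $1$), this proves that every column of $G$ sums to $1$. Dually, the column of $G'$ lying above the $i$-th column of $-A^t$ sums to $-(A_{i,1}+A_{i,2})$; the column-sum condition for $G'$ forces $A_{i,1}+A_{i,2}=-1$ for each $i$, and consequently the $i$-th row of $G=(I_k\mid A)$ sums to $1+A_{i,1}+A_{i,2}=0$. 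This is exactly the conclusion of Conjecture \ref{guess}.

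There is no real obstacle here: the only non-trivial ingredients are Lemma \ref{gt} and Theorem \ref{k=2}, and both are already at our disposal. The proof is essentially a transpose-and-quote argument, so the task reduces to making sure the row/column sum dictionary between $G$ and $G'$ is recorded correctly, which is what I have sketched above.
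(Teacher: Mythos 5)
Your proposal is correct and is exactly the paper's argument: the corollary is stated as an immediate consequence of Lemma \ref{gt} combined with Theorem \ref{k=2}, and your row/column bookkeeping between $G=(I_k\mid A)$ and $G'=(I_2\mid -A^t)$ checks out. You have simply written out the translation that the paper leaves implicit.
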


\section{Conclusions and further research}
In this paper, we investigated \good\ codes, that is, linear $[n,k]$-codes such as $C$ for which there is another code
$D$, permutation equivalent to $C^\perp$, with $(C,D)$ being an LCP of codes. We observed that if the sum of all
entries on any row of a generating matrix $G$ of $C$ is zero and $(1,1,\ldots, 1)$ is in the row space of $G$, then $C$
is not
\good. We conjectured that the converse also holds and showed that, among several other special cases, in the cases that
$k\leq 2$ or $n-k\leq 2$ this conjecture holds. We also studied the sum of all entries of a $k\times k$ symmetric
matrix $M$ with the property that $-1$ is an eigenvalue of $PM$ for all permutation matrices and conjectured that this
sum is always $-k$. We showed that this holds for small $k$'s and for fields with large or zero characteristic and used
this to prove that several classes of codes are \good.

As further research, not only it remains to prove or reject Conjectures \ref{guess} and \ref{conj Pk}, but also it
remains to find efficient algorithms that for an \good\ code $C$, find its dual-equivalent code $D$ with $(C,D)$ an LCP
of codes. This can be used in cryptography against side channel and fault injection attacks. Also regarding Conjecture
\ref{conj Pk}, it should be mentioned that we used CoCoA computer software, to find all $k\times k$ matrices $M$ with
-1 as an eigenvalue of $PM$ for all permutation matrices $P$, when $k\leq 5$ over several small fields. All of them
satisfied a stronger condition than $\Pi_k$. In fact, either the sum of all entries on any row of them is $-1$ or the
sum of all entries on any column of them is $-1$. Note that if $M$ satisfies any of these two conditions, then $-1$ is
an eigenvalue of $PM$ for all permutation matrices $P$, since either $(I+PM)e^t=0$ or $e(I+MP)=0$ for $e=(1,\ldots,1)$.



\end{document}